\title{Structural Refinement for the Modal nu-Calculus}
\author{Uli Fahrenberg \and Axel Legay \and Louis-Marie Traonouez}
\institute{Inria / IRISA, Campus de Beaulieu, 35042 Rennes CEDEX,
  France}
\begin{document}

\maketitle

\begin{abstract}
  We introduce a new notion of structural refinement, a sound
  abstraction of logical implication, for the modal nu-calculus.  Using
  new translations between the modal nu-calculus and disjunctive modal
  transition systems, we show that these two specification formalisms
  are structurally equivalent.

  Using our translations, we also transfer the structural operations of
  composition and quotient from disjunctive modal transition systems to
  the modal nu-calculus.  This shows that the modal nu-calculus supports
  composition and decomposition of specifications.
\end{abstract}

\section{Introduction}

There are two conceptually different approaches for the specification
and verification of properties of formal models.  \emph{Logical}
approaches make use of logical formulae for expressing properties and
then rely on efficient model checking algorithms for verifying whether
or not a model satisfies a formula.  \emph{Automata}-based approaches,
on the other hand, exploit equivalence or refinement checking for
verifying properties, given that models and properties are specified
using the same (or a closely related) formalism.

The logical approaches have been quite successful, with a plethora of
logical formalisms available and a number of successful model checking
tools.  One particularly interesting such formalism is the modal
$\mu$-calculus~\cite{DBLP:journals/tcs/Kozen83}, which is universal in
the sense that it generalizes most other temporal logics, yet
mathematically simple and amenable to analysis.

One central problem in the verification of formal properties is
\emph{state space explosion}: when a model is composed of many
components, the state space of the combined system quickly grows too big
to be analyzed.  To combat this problem, one approach is to employ
\emph{compositionality}.  When a model consists of several components,
each component would be model checked by itself, and then the
components' properties would be composed to yield a property which
automatically is satisfied by the combined model.

Similarly, given a global property of a model and a component of the
model that is already known to satisfy a local property, one would be
able to \emph{decompose} automatically, from the global property and the
local property, a new property which the rest of the model must satisfy.
We refer to~\cite{DBLP:conf/concur/Larsen90} for a good account of these
and other features which one would wish specifications to have.

As an alternative to logical specification formalisms and with an eye to
compositionality and decomposition, automata-based \emph{behavioral}
specifications were introduced in~\cite{DBLP:conf/avmfss/Larsen89}.
Here the specification formalism is a generalization of the modeling
formalism, and the satisfaction relation between models and
specifications is generalized to a refinement relation between
specifications, which resembles simulation and bisimulation and can be
checked with similar algorithms.

For an example, we refer to Fig.~\ref{fi:example} which shows the
property informally specified as ``after a \req(uest), no \idle(ing) is
allowed, but only \work, until \grant\ is executed'' using the logical
formalisms of CTL~\cite{DBLP:conf/lop/ClarkeE81} and the modal
$\mu$-calculus~\cite{DBLP:journals/tcs/Kozen83} and the behavioral
formalism of disjunctive modal transition
systems~\cite{DBLP:conf/lics/LarsenX90}.

\begin{figure}[tp]
  \begin{minipage}[c]{7.5cm}
  \begin{gather*}
    \text{AG}(\req\Rightarrow \text{AX}(\work \text{ AW } \grant))
    \\[3ex]
    \begin{aligned}
      \nu X.\big( &[ \grant, \idle, \work] X\land \\
      &\quad [ \req] \nu Y.( \langle \work\rangle Y\lor \langle
      \grant\rangle X)\land[ \idle, \req] \lff\big)
    \end{aligned} \\[3ex]
    \begin{aligned}
      X &\overset \nu= [ \grant, \idle, \work] X\land[ \req] Y \\
      Y &\overset \nu= ( \langle \work\rangle Y\lor \langle \grant\rangle
      X)\land[ \idle, \req] \lff
    \end{aligned}
  \end{gather*}
  \end{minipage}
  \begin{minipage}[c]{4cm}
    \begin{tikzpicture}[->, >=stealth', font=\footnotesize,
      state/.style={shape=circle, draw, initial text=,inner
        sep=.5mm,minimum size=2mm}, scale=1.4]
      \node[state,initial] (X) {};
      \node[state] (Y) at (2,0) {};
      \path[densely dashed,->] (X) edge node[above]{$\req$} (Y); 
      \path[densely dashed,->] (X) edge[loop above]
      node[above]{$\grant,\work,\idle$} (X);
      \coordinate (Yd) at (1.1,-.4) {};
      \path (Y) edge[-, thick] (Yd);
      \path (Yd) edge[out= 180, in= -45] node[below]{\grant} (X);
      \path (Yd) edge[out= -45, in= 240] node[below]{\work} (Y);
    \end{tikzpicture}
  \end{minipage}
  \caption{%
    \label{fi:example}
    An example property specified in CTL (top left), in the modal
    $\mu$-calculus (below left), as a modal equation system (third
    left), and as a DMTS (right).}
\end{figure}
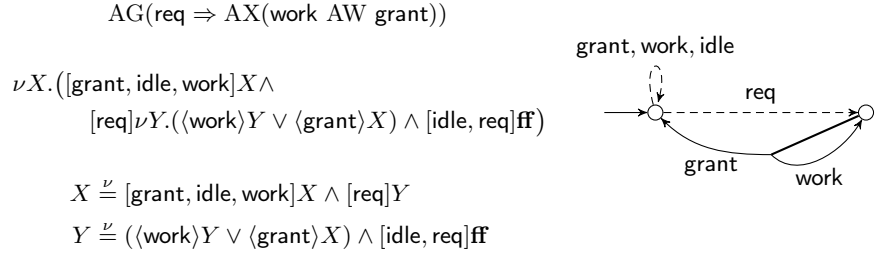

The precise relationship between logical and behavioral specification
formalisms has been subject to some investigation.
In~\cite{DBLP:conf/avmfss/Larsen89}, Larsen shows that any \emph{modal
  transition system} can be translated to a formula in Hennessy-Milner
logic which is equivalent in the sense of admitting the same models.
Conversely, Boudol and Larsen show in~\cite{DBLP:journals/tcs/BoudolL92}
that any formula in Hennessy-Milner logic is equivalent to a finite
disjunction of modal transition systems.

We have picked up this work in~\cite{DBLP:conf/concur/BenesDFKL13},
where we show that any \emph{disjunctive modal transition system} (DMTS)
is equivalent to a formula in the \emph{modal $\nu$-calculus}, the
safety fragment of the modal $\mu$-calculus which uses only maximal
fixed points, and vice versa.  (Note that the modal $\nu$-calculus is
equivalent to Hennessy-Milner logic with recursion and maximal fixed
points.)  Moreover, we show in~\cite{DBLP:conf/concur/BenesDFKL13} that
DMTS are as expressive as (non-deterministic) \emph{acceptance
  automata}~\cite{report/irisa/Raclet07,DBLP:journals/entcs/Raclet08}.
Together with the inclusions of~\cite{DBLP:conf/atva/BenesKLMS11}, this
settles the expressivity question for behavioral specifications: they
are at most as expressive as the modal $\nu$-calculus.

\medskip

In this paper, we show that not only are DMTS as expressive as the modal
$\nu$-calculus, but the two formalisms are \emph{structurally
  equivalent}.  Introducing a new notion of structural refinement for
the modal $\nu$-calculus (a sound abstraction of logical implication),
we show that one can freely translate between the modal $\nu$-calculus
and DMTS, while preserving structural refinement.

DMTS form a \emph{complete specification
  theory}~\cite{DBLP:conf/fase/BauerDHLLNW12} in that they both admit
logical operations of conjunction and disjunction and structural
operations of composition and
quotient~\cite{DBLP:conf/concur/BenesDFKL13}.  Hence they support full
compositionality and decomposition in the sense
of~\cite{DBLP:conf/concur/Larsen90}.  Using our translations, we can
transport these notions to the modal $\nu$-calculus, thus also turning
the modal $\nu$-calculus into a complete specification theory.

In order to arrive at our translations, we first recall DMTS and
(non-determi\-nistic) \emph{acceptance automata} in
Section~\ref{se:structspec}.  We also introduce a new hybrid modal
logic, which can serve as compact representation for acceptance automata
and should be of interest in itself.  Afterwards we show, using the
translations introduced in~\cite{DBLP:conf/concur/BenesDFKL13}, that
these formalisms are structurally equivalent.

In Section~\ref{se:mu} we recall the modal $\nu$-calculus and review the
translations between DMTS and the modal $\nu$-calculus which were
introduced in~\cite{DBLP:conf/concur/BenesDFKL13}.  These in turn are
based on work by Boudol and Larsen in~\cite{DBLP:journals/tcs/BoudolL92,DBLP:conf/avmfss/Larsen89}, hence fairly standard.  We show that,
though semantically correct, the two translations are structurally
\emph{mismatched} in that they relate DMTS refinement to two different
notions of $\nu$-calculus refinement.  To fix the mismatch, we introduce
a new translation from the modal $\nu$-calculus to DMTS and show that
using this translation, the two formalisms are structurally equivalent.

In Section~\ref{se:spectheory}, we use our translations to turn the
modal $\nu$-calculus into a complete specification theory.  We remark
that all our translations and constructions are based on a new
\emph{normal form} for $\nu$-calculus expressions, and that turning a
$\nu$-calculus expression into normal form may incur an exponential
blow-up.  However, the translations and constructions preserve the
normal form, so that this translation only need be applied once in the
beginning.

We also note that composition and quotient operators are used in other
logics such as \eg~spatial~\cite{DBLP:journals/iandc/CairesC03} or
separation
logics~\cite{DBLP:conf/lics/Reynolds02,DBLP:conf/csl/OHearnRY01}.
However, in these logics they are treated as \emph{first-class}
operators, \ie~as part of the formal syntax.  In our approach, on the
other hand, they are defined as operations on logical expressions which
as results again yield logical expressions (without compositions or
quotients).

Note that some proofs have been relegated to a separate appendix.

\section{Structural Specification Formalisms}
\label{se:structspec}

Let $\Sigma$ be a finite set of labels.  A \emph{labeled transition
  system} (LTS) is a structure $\cI=( S, S^0, \omust)$ consisting of a
finite set of \emph{states} $S$, a subset $S^0\subseteq S$ of
\emph{initial states} and a \emph{transition relation} $\omust\subseteq
S\times \Sigma\times S$.

\subsection{Disjunctive modal transition systems}

A \emph{disjunctive modal transition system} (DMTS) is a structure
$\cD=( S, S^0, \omay, \omust)$ consisting of finite sets $S\supseteq
S^0$ of states and initial states, a \emph{may}-transition relation
$\omay\subseteq S\times \Sigma\times S$, and a \emph{disjunctive
  must}-transition relation $\omust\subseteq S\times 2^{ \Sigma\times
  S}$.  It is assumed that for all $( s, N)\in \omust$ and all $( a,
t)\in N$, $( s, a, t)\in \omay$.

As customary, we write $s\may a t$ instead of $( s, a, t)\in \omay$,
$s\must{} N$ instead of $( s, N)\in \omust$, $s\may a$ if there exists
$t$ for which $s\may a t$, and $s\notmay a$ if there does not.  

The intuition is that may-transitions $s\may a t$ specify which
transitions are permitted in an implementation, whereas a
must-transitions $s\must{} N$ stipulates a disjunctive requirement: at
least one of the choices $( a, t)\in N$ must be implemented.  A DMTS $(
S, S^0, \omay, \omust)$ is an \emph{implementation} if $\omust=\{( s,\{(
a, t)\})\mid s\may a t\}$; DMTS implementations are precisely LTS.

DMTS were introduced in~\cite{DBLP:conf/lics/LarsenX90} in the context
of equation solving, or \emph{quotient}, for specifications and are used
\eg~in \cite{DBLP:conf/atva/BenesCK11} for LTL model checking.  They are
a natural closure of \emph{modal transition systems}
(MTS)~\cite{DBLP:conf/avmfss/Larsen89} in which all disjunctive
must-transitions $s\must{} N$ lead to singletons $N=\{( a, t)\}$.

Let $\cD_1=( S_1, S^0_1, \omay_1, \omust_1)$, $\cD_2=( S_2, S^0_2,
\omay_2, \omust_2)$ be DMTS.  A relation $R\subseteq S_1\times S_2$ is a
\emph{modal refinement} if it holds for all $( s_1, s_2)\in R$ that
\begin{compactitem}
\item for all $s_1\may a t_1$ there is $t_2\in S_2$ with $s_2\may a t_2$
  and $( t_1, t_2)\in R$, and
\item for all $s_2\must{} N_2$ there is $s_1\must{} N_1$ such that for
  each $( a, t_1)\in N_1$ there is $( a, t_2)\in N_2$ with $( t_1,
  t_2)\in R$.
\end{compactitem}
We say that $\cD_1$ \emph{modally refines} $\cD_2$, denoted $\cD_1\mr
\cD_2$, whenever there exists a modal refinement $R$ such that for all
$s^0_1\in S^0_1$, there exists $s^0_2\in S^0_2$ for which $( s^0_1,
s^0_2)\in R$.  We write $\cD_1\mreq \cD_2$ if $\cD_1\mr \cD_2$ and
$\cD_2\mr \cD_1$.  For states $s_1\in S_1$, $s_2\in S_2$, we write
$s_1\mr s_2$ if the DMTS $( S_1,\{ s_1\}, \omay_1, \omust_1)\mr( S_2,\{
s_2\}, \omay_2, \omust_2)$.

Note that modal refinement is reflexive and transitive, \ie~a preorder
on DMTS.  Also, the relation on states $\mathord\mr\subseteq S_1\times
S_2$ defined above is itself a modal refinement, indeed the maximal
modal refinement under the subset ordering.

The \emph{set of implementations} of an DMTS $\cD$ is $\sem \cD=\{
\cI\mr \cD\mid \cI~\text{implement-}\linebreak \text{ation}\}$.  This
is, thus, the set of all LTS which satisfy the specification given by
the DMTS $\cD$.  We say that $\cD_1$ \emph{thoroughly refines} $\cD_2$,
and write $\cD_1\tr \cD_2$, if $\sem{ \cD_1}\subseteq \sem{ \cD_2}$.  We
write $\cD_1\treq \cD_2$ if $\cD_1\tr \cD_2$ and $\cD_2\tr \cD_1$.  For
states $s_1\in S_1$, $s_2\in S_2$, we write $\sem{ s_1}= \sem{( S_1,\{
  s_1\}, \omay_1, \omust_1)}$ and $s_1\tr s_2$ if $\sem{ s_1}\subseteq
\sem{ s_2}$.

The below proposition, which follows directly from transitivity of modal
refinement, shows that modal refinement is \emph{sound} with respect to
thorough refinement; in the context of specification theories, this is
what one would expect, and we only include it for completeness of
presentation.  It can be shown that modal refinement is also
\emph{complete} for \emph{deterministic}
DMTS~\cite{DBLP:journals/tcs/BenesKLS09}, but we will not need this
here.

\begin{proposition}
  \label{pr:mrvstr}
  For all DMTS $\cD_1$, $\cD_2$, $\cD_1\mr \cD_2$ implies $\cD_1\tr
  \cD_2$. \noproof
\end{proposition}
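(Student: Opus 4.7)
The plan is to unfold the definition of $\sem{\cdot}$ and invoke transitivity of modal refinement, as the paper hints. The whole argument is essentially a one-line chase of definitions, so I would present it compactly.

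First, I would fix DMTS $\cD_1, \cD_2$ with $\cD_1 \mr \cD_2$, and take an arbitrary implementation $\cI \in \sem{\cD_1}$. By definition of the implementation semantics, this means $\cI$ is an LTS (equivalently, a DMTS in implementation form) and $\cI \mr \cD_1$.

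Next, I would apply transitivity of modal refinement, which is explicitly asserted in the paragraph preceding the proposition, to combine $\cI \mr \cD_1$ with the hypothesis $\cD_1 \mr \cD_2$ to conclude $\cI \mr \cD_2$. Since $\cI$ is still an implementation, this gives $\cI \in \sem{\cD_2}$. As $\cI$ was arbitrary, $\sem{\cD_1} \subseteq \sem{\cD_2}$, which is exactly $\cD_1 \tr \cD_2$.

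There is really no obstacle here: the only nontrivial ingredient, transitivity of $\mr$, is taken for granted by the preceding discussion, and it in turn would be shown by the standard argument that the relational composition of two modal refinements is again a modal refinement (verify the may-clause by chaining witnesses forward, and the must-clause by chaining disjunctive requirements backward through $\cD_2$, then $\cD_1$, then $\cI$). Since the paper flags this as included only for completeness and offers no proof of its own, I would keep the writeup to a single short paragraph.
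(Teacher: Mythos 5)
Your proof is correct and matches the paper's intent exactly: the paper gives no explicit proof, noting only that the proposition ``follows directly from transitivity of modal refinement,'' which is precisely the unfolding of $\sem{\cdot}$ and the chaining $\cI \mr \cD_1 \mr \cD_2$ that you carry out. Your parenthetical sketch of why $\mr$ is transitive (composing refinement relations, chaining the may-clause forward and the must-clause backward) is also sound, so nothing is missing.
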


We introduce a new construction on DMTS which will be of interest for
us; intuitively, it adds all possible may-transitions without changing
the implementation semantics.  The \emph{may-completion} of a DMTS
$\cD=( S, S^0, \omay, \omust)$ is $\maycomp{ \cD}=( S, S^0,
\omay_\omaycomp, \omust)$ with
\begin{equation*}
  \omay_{ \omaycomp}=\{( s, a, t')\subseteq S\times \Sigma\times S\mid
  \exists( s, a, t)\in \omay: t'\tr t\}.
\end{equation*}
Note that to compute the may-completion of a DMTS, one has to decide
thorough refinements, hence this computation (or, more precisely,
deciding whether a given DMTS is may-complete) is
EXPTIME-complete~\cite{DBLP:journals/iandc/BenesKLS12}.  We show an
example of a may-completion in Fig.~\ref{fi:may-complete}.

\begin{figure}[tp]
  \centering
  \begin{tikzpicture}[->, >=stealth', font=\footnotesize,
    state/.style={shape=circle, draw, initial text=,inner
      sep=.5mm,minimum size=5mm}, scale=.75]
    \begin{scope}
      \node at (0,2.5) {$\cD$};
      \node[state, initial] (s') at (.5,0) {$s\vphantom'$};
      \node[state] (t') at (2,1.5) {$t_1\vphantom'$};
      \node[state] (t''') at (2,-1) {$t_3\vphantom'$};
      \node[state] (u') at (4,2.5) {$u_1\vphantom'$};
      \node[state] (u'') at (4,.5) {$u_2\vphantom'$};
      \node[state] (u''') at (4,-1) {$u_3\vphantom'$};
      \node[state] (v') at (6,2.5) {$v_1\vphantom'$};
      \node[state] (v''') at (6,-1) {$v_3\vphantom'$};
      \path[densely dashed] (s') edge node[above] {$a$} (t');
      \path[densely dashed] (t') edge node[above] {$a$} (u');
      \path[densely dashed] (t') edge node[below] {$a$} (u'');
      \path[densely dashed] (t''') edge node[below] {$a$} (u''');
      \path (u') edge node[above] {$a$} (v');
      \path[densely dashed, loop right] (u'') edge node[below] {$d$} (u'');
      \path[densely dashed] (u''') edge node[below] {$a$} (v''');
      \path[densely dashed, loop right] (v') edge node[right] {$b,c$}
      (v');
      \path[densely dashed, loop right] (v''') edge node[right] {$b$}
      (v''');
    \end{scope}
    \begin{scope}[xshift=25em]
      \node at (0.5,2.5) {$\maycomp{\cD}$};
      \node[state, initial] (s') at (.5,0) {$s'$};
      \node[state] (t') at (2,1.5) {$t_1'$};
      \node[state] (t''') at (2,-1) {$t_3'$};
      \node[state] (u') at (4,2.5) {$u_1'$};
      \node[state] (u'') at (4,.5) {$u_2'$};
      \node[state] (u''') at (4,-1) {$u_3'$};
      \node[state] (v') at (6,2.5) {$v_1'$};
      \node[state] (v''') at (6,-1) {$v_3'$};
      \path[densely dashed] (s') edge node[above] {$a$} (t');
      \path[densely dashed] (t') edge node[above] {$a$} (u');
      \path[densely dashed] (t') edge node[below] {$a$} (u'');
      \path[densely dashed] (t''') edge node[below] {$a$} (u''');
      \path (u') edge node[above] {$a$} (v');
      \path[densely dashed] (u''') edge node[below] {$a$} (v''');
      \path[densely dashed, loop right] (v') edge node[right] {$b, c$} (v');
      \path[densely dashed, loop right] (v''') edge node[right] {$b$} (v''');
      \path[densely dashed, loop right] (u'') edge node[right] {$d$} (u'');
      \path[densely dashed] (s') edge node[below] {$a$} (t''');
      \path[densely dashed] (u') edge[out=320, in=110] node[right,
      pos=.35] {$a$} (v''');
      \path[densely dashed] (v') edge[out=290, in=70] node[right] {$b,
        c$} (v''');
      \path[dotted] (t''') edge node[rotate=90] {$\tr$} (t');
      \path[dotted] (v''') edge node[rotate=90] {$\tr$} (v');
    \end{scope}
  \end{tikzpicture}
  \caption{%
    \label{fi:may-complete}
    A MTS $\cD$ (left) and its may-completion $\maycomp{ \cD}$ (right).
    In $\maycomp \cD$, the semantic inclusions which lead to extra
    may-transitions are depicted with dotted arrows.}
\end{figure}
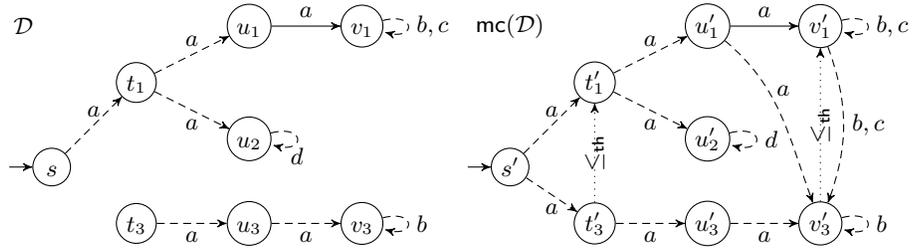

\begin{proposition}
  \label{pr:maycomp}
  For any DMTS $\cD$, $\cD\mr \maycomp{ \cD}$ and $\cD\treq \maycomp{
    \cD}$.
\end{proposition}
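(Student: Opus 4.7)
The statement has two parts: $\cD\mr\maycomp{\cD}$ and $\cD\treq\maycomp{\cD}$.

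For $\cD\mr\maycomp{\cD}$, the plan is to use the identity relation $R=\{(s,s)\mid s\in S\}$ on the shared state set. The must-clause is trivial because $\maycomp{\cD}$ and $\cD$ carry exactly the same disjunctive must-transitions, and the may-clause holds because $s\may a t$ in $\cD$ together with the reflexive $t\tr t$ yields $s\may_\omaycomp a t$ by the definition of $\omay_\omaycomp$. Since initial states coincide, the same $R$ witnesses the pointed refinement too, so the easy direction $\cD\tr\maycomp{\cD}$ of the equivalence follows from Proposition~\ref{pr:mrvstr}.

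The substantive direction is $\maycomp{\cD}\tr\cD$, \ie~every implementation of $\maycomp{\cD}$ is an implementation of $\cD$. My plan is to start from a witness $R:\cI\mr\maycomp{\cD}$ and construct a modal refinement $R':\cI\mr\cD$. Since must-transitions agree, only the may-clause needs new work: whenever $(i,s)\in R$ and $i\may a j$ in $\cI$, the relation $R$ supplies $s\may_\omaycomp a s'$ with $(j,s')\in R$, and the definition of $\maycomp{\cD}$ yields a witness $s\may a t$ in $\cD$ with $s'\tr t$. The pair $(j,t)$ is the one $R'$ should contain, but $R$ only relates $j$ to $s'$; the redirection from $s'$ to $t$ is the real obstacle.

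My plan for the redirection is to take $R'$ to be the maximal modal refinement from $\cI$ to $\maycomp{\cD}$, equivalently $R'=\{(i,s)\mid (\cI,\{i\})\in\sem{(\maycomp{\cD},\{s\})}\}$, so that the obligation $(j,t)\in R'$ reduces to $(\cI,\{j\})\in\sem{(\maycomp{\cD},\{t\})}$, which I plan to close via the chain
\begin{equation*}
  \sem{(\maycomp{\cD},\{s'\})}\subseteq\sem{(\cD,\{s'\})}\subseteq\sem{(\cD,\{t\})}\subseteq\sem{(\maycomp{\cD},\{t\})},
\end{equation*}
whose middle inclusion is $s'\tr t$ by definition, whose rightmost is the pointed Proposition~\ref{pr:mrvstr} applied to the identity refinement $(\cD,\{t\})\mr(\maycomp{\cD},\{t\})$, and whose leftmost is the pointed form of the very statement being proved. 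The main obstacle will be discharging this self-reference cleanly; I plan to handle it either by a standard up-to-$\tr$ coinductive argument on the modal-refinement operator, or by first upgrading the proposition to the pointwise equality $\sem{(\cD,\{s\})}=\sem{(\maycomp{\cD},\{s\})}$ for all $s\in S$ as a joint greatest fixed point and then specialising to initial states.
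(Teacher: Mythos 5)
Your plan coincides with the paper's own proof in both halves: the easy direction is exactly the paper's observation that must-transitions are untouched and every original may-transition survives (via reflexivity of $\tr$), and for the hard direction the paper likewise takes the \emph{largest} modal refinement $R$ between $\cI$ and $\maycomp{\cD}$ and argues that it is already a modal refinement into $\cD$, redirecting a matched successor $s'$ to the witness $t$ with $s\may{a} t$ in $\cD$ and $s'\tr t$. So up to the redirection step your proposal and the paper are the same argument.

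The genuine gap is the step you yourself flag, and your proposal does not close it. From $(j,s')\in R$ you only get $j\in\sem{(\maycomp{\cD},\{s'\})}$, whereas to use $s'\tr t$ (which by definition of the may-completion is the inclusion $\sem{(\cD,\{s'\})}\subseteq\sem{(\cD,\{t\})}$) you need $j\in\sem{(\cD,\{s'\})}$; the missing inclusion $\sem{(\maycomp{\cD},\{s'\})}\subseteq\sem{(\cD,\{s'\})}$ is precisely the pointwise form of the statement being proved. Neither of your two suggested repairs obviously discharges this. For the ``up-to-$\tr$'' route you would need the composed relation of pairs $(j,t)$, with $j$ refining some $s'$ in $\maycomp{\cD}$ and $s'\tr t$, to again satisfy the refinement clauses; but $s'\tr t$ is a purely semantic inclusion and yields no structural matching between the (may- or must-) transitions of $s'$ and those of $t$ --- this is exactly the incompleteness of modal refinement for nondeterministic DMTS --- so checking the clauses for the composed relation requires the very transfer you are trying to establish, i.e.\ the soundness of that up-to technique is essentially equivalent to the proposition. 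The ``joint greatest fixed point / pointwise equality'' reformulation re-encounters the identical obstacle in its may-clause, so it restates the problem rather than solving it. To be fair, the paper's proof asserts this step without comment (``$(i',d')\in R$ implies $i'\in\sem{d'}$'', silently reading $\sem{d'}$ with respect to $\cD$ although $R$ only certifies membership with respect to $\maycomp{\cD}$), so you have correctly isolated the crux; but as written your proposal, like the sketched fixes, leaves that crux undischarged, and a complete argument needs a genuinely new idea there rather than a citation of standard coinduction-up-to machinery.
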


\begin{proof}
  It is always the case that $\cD \mr \cD$, and adding may transitions
  on the right side preserves modal refinement.  Therefore it is
  immediate that $\cD \mr \maycomp{\cD}$, hence also $\cD \tr
  \maycomp{\cD}$.

  To prove that $\maycomp{\cD} \tr \cD$, we consider an implementation
  $\cI \mr \maycomp{\cD}$; we must prove that $\cI \mr \cD$.  Write
  $\cD=( S, S^0, \omay, \omust)$, $\cI=( I, I^0, \omay_I, \omust_I)$ and
  $\maycomp{ \cD}=( S, S^0, \omay_\omaycomp, \omust)$.  Let $R
  \subseteq I \times S$ be the largest modal refinement between $\cI$
  and $\maycomp{ \cD}$.  We now prove that $R$ is also a modal
  refinement between $\cI$ and $\cD$.
  For all $(i,d) \in R$:
  \begin{compactitem}
  \item For all $i \may{a}_I i'$, there exists $d' \in S$ such that $d
    \may{a}_\omaycomp d'$ and $(i',d') \in R$.  Then by definition of
    $\omay_\omaycomp$, there exists $d'' \in S$ such that $d \may{a}
    d''$ and $\sem{d'} \subseteq \sem{d''}$.  $(i',d') \in R$ implies
    $i' \in \sem{d'}$, which implies $i' \in \sem{d''}$. This means that
    $i'\mr d''$, and since $R$ is the largest refinement relation in $I
    \times S$ it must be the case that $(i',d'') \in R$.
  \item The case of must transitions follows immediately, since must
    transitions are exactly the same in $\cD$ and $\maycomp{\cD}$. \qed
  \end{compactitem}
\end{proof}

\begin{example}
  \label{ex:mc_prop}
  The example in Fig.~\ref{fi:may-complete} shows that generally,
  $\maycomp{ \cD}\not\mr \cD$.  First, $t_3 \tr t_1$: For an
  implementation $\cI=( I, I^0, \must{})\in \sem{ t_3}$ with modal
  refinement $R\subseteq I\times\{ t_3, u_3, v_3\}$, define $R'\subseteq
  I\times\{ t_1, u_1, u_2, v_1\}$ by
  \begin{align*}
    R'=\{( i, t_1)\mid( i, t_3)\in R\} &\cup \{( i, v_1)\mid( i, v_3)\in
    R\} \\
    & \cup\{( i, u_1)\mid( i, u_3)\in R, i\must a\} \\
    & \cup\{( i, u_2)\mid( i, u_3)\in R, i\notmust a\},
  \end{align*}
  then $R'$ is a modal refinement $\cI\mr t_1$.  Similarly, $t_3'\tr
  t_1'$ in $\maycomp \cD$.

  On the other hand, $t_3\not\mr t_1$ (and similarly, $t_3'\not\mr
  t_1'$), because neither $u_3 \mr u_1$ nor $u_3 \mr u_2$.  Now in the
  modal refinement game between $\maycomp{ \cD}$ and $\cD$, the
  may-transition $s'\may a t_3'$ has to be matched by $s\may a t_1$, but
  then $t_3'\not\mr t_1$, hence $\maycomp{ \cD}\not\mr \cD$.

  Also, the may-completion does not necessarily preserve modal
  refinement: Consider the DMTS $\cD$ from Fig.~\ref{fi:may-complete}
  and $\cD_1$ from Fig.~\ref{fi:may-path}, and note first that
  $\maycomp{ \cD_1}= \cD_1$.  It is easy to see that $\cD \mr \cD_1$
  (just match states in $\cD$ with their double-prime cousins in
  $\cD_1$), but $\maycomp{\cD}\not\mr \maycomp{ \cD_1}= \cD_1$: the
  may-transition $s'\may a t_3'$ has to be matched by $s''\may a t_1''$
  and $t_3'\not\mr t_1''$.

  Lastly, the may-completion can also create modal refinement:
  Considering the DMTS $\cD_2$ from Fig.~\ref{fi:may-path}, we see that
  $\cD_2 \not\mr \cD$, but $\maycomp{ \cD_2}= \cD_2 \mr \maycomp{
    \cD}$. 
\end{example}

\begin{figure}[tp]
  \centering
  \begin{tikzpicture}[->, >=stealth', font=\footnotesize,
    state/.style={shape=circle, draw, initial text=,inner
      sep=.5mm,minimum size=5mm}, scale=.7] 
    \begin{scope}
      \node at (0,1.5) {$\cD_1$};
      \node[state, initial] (s') at (0,0) {$s''$};
      \node[state] (t') at (2,0) {$t_1''$};
      \node[state] (u') at (4,1) {$u_1''$};
      \node[state] (u'') at (4,-1) {$u_2''$};
      \node[state] (v') at (6,1) {$v_1''$};
      \path[densely dashed] (s') edge node[above] {$a$} (t');
      \path[densely dashed] (t') edge node[above] {$a$} (u');
      \path[densely dashed] (t') edge node[below] {$a$} (u'');
      \path (u') edge node[above] {$a$} (v');
      \path[densely dashed, loop right] (v') edge node[right] {$b, c$} (v');
      \path[densely dashed, loop right] (u'') edge node[right] {$d$} (u'');
    \end{scope}
    \begin{scope}[xshift=27em] 
      \node at (0,1.5) {$\cD_2$};
      \node[state, initial] (s) at (0,0) {$s''$};
      \node[state] (t) at (2,0) {$t''$};
      \node[state] (u) at (4,0) {$u''$};
      \node[state] (v) at (6,0) {$v''$};
      \path[densely dashed] (s) edge node[above] {$a$} (t);
      \path[densely dashed] (t) edge node[above] {$a$} (u);
      \path[densely dashed] (u) edge node[above] {$a$} (v);
      \path[densely dashed, loop right] (v) edge node[right] {$b$} (v);
    \end{scope}
  \end{tikzpicture}
  \caption{%
    \label{fi:may-path}
    DMTS $\cD_1$, $\cD_2$ from Example~\ref{ex:mc_prop}.
  }
\end{figure}

\subsection{Acceptance automata}

A (non-deterministic) \emph{acceptance automaton} (AA) is a structure
$\cA=( S, S^0, \Tran)$, with $S\supseteq S^0$ finite sets of states and
initial states and $\Tran: S\to 2^{ 2^{ \Sigma\times S}}$ an assignment
of \emph{transition constraints}.  We assume that for all $s^0\in S^0$,
$\Tran( s^0)\ne \emptyset$.

An AA is an \emph{implementation} if it holds for all $s\in S$ that
$\Tran( s)=\{ M\}$ is a singleton; hence also AA implementations are
precisely LTS.
Acceptance automata were first introduced
in~\cite{report/irisa/Raclet07} (see
also~\cite{DBLP:journals/entcs/Raclet08}, where a slightly different
language-based approach is taken), based on the notion of acceptance
trees in~\cite{DBLP:journals/jacm/Hennessy85}; however, there they are
restricted to be \emph{deterministic}.  We employ no such restriction
here.  The following notion of modal refinement for AA was also
introduced in~\cite{report/irisa/Raclet07}.

Let $\cA_1=( S_1, S^0_1, \Tran_1)$ and $\cA_2=( S_2, S^0_2, \Tran_2)$ be
AA.  A relation $R\subseteq S_1\times S_2$ is a \emph{modal refinement}
if it holds for all $( s_1, s_2)\in R$ and all $M_1\in \Tran_1( s_1)$
that there exists $M_2\in \Tran_2( s_2)$ such that
\begin{compactitem}
\item $\forall( a, t_1)\in M_1: \exists( a, t_2)\in M_2:( t_1, t_2)\in
  R$,
\item $\forall( a, t_2)\in M_2: \exists( a, t_1)\in M_1:( t_1, t_2)\in
  R$.
\end{compactitem}
As for DMTS, we write $\cA_1\mr \cA_2$ whenever there exists a modal
refinement $R$ such that for all $s^0_1\in S^0_1$, there exists
$s^0_2\in S^0_2$ for which $( s^0_1, s^0_2)\in R$.  Sets of
implementations and thorough refinement are defined as for DMTS.  Note
that as both AA and DMTS implementations are LTS, it makes sense to use
thorough refinement and equivalence \emph{across} formalisms, writing
\eg~$\cA\treq \cD$ for an AA $\cA$ and a DMTS $\cD$.

\subsection{Hybrid modal logic}

We introduce a hybrid modal logic which can serve as compact
representation of AA.  This logic is closely related to the Boolean
modal transition systems of~\cite{DBLP:conf/atva/BenesKLMS11} and hybrid
in the sense of~\cite{book/Prior68,DBLP:journals/igpl/Blackburn00}: it
contains nominals, and the semantics of a nominal is given as all sets
which contain the nominal.

For a finite set $X$ of nominals, let $\cL( X)$ be the set of formulae
generated by the abstract syntax $\cL( X)\ni \phi\coloneqq \ttt\mid
\fff\mid \langle a\rangle x\mid \neg \phi\mid \phi\land \phi$, for $a\in
\Sigma$ and $x\in X$.  The semantics of a formula is a set of subsets of
$\Sigma\times X$, given as follows: $\lsem{ \ttt}= 2^{ \Sigma\times X}$,
$\lsem{ \fff}= \emptyset$, $\lsem{ \neg \phi}= 2^{ \Sigma\times
  X}\setminus \lsem{ \phi}$, $\lsem{ \langle a\rangle x}=\{ M\subseteq
\Sigma\times X\mid( a, x)\in M\}$, and $\lsem{ \phi\land \psi}= \lsem{
  \phi}\cap \lsem{ \psi}$.  We also define disjunction $\phi_1\lor
\phi_2= \neg( \phi_1\land \phi_2)$.

An \emph{$\cL$-expression} is a structure $\cE=( X, X^0, \Phi)$
consisting of finite sets $X^0\subseteq X$ of variables and a mapping
$\Phi: X\to \cL( X)$.  Such an expression is an \emph{implementation} if
$\lsem{ \Phi( x)}=\{ M\}$ is a singleton for each $x\in X$.  It can
easily be shown that $\cL$-implementations precisely correspond to LTS.

Let $\cE_1=( X_1, X^0_1, \Phi_1)$ and $\cE_2=( X_2, X^0_2, \Phi_2)$ be
$\cL$-expressions.  A relation $R\subseteq X_1\times X_2$ is a
\emph{modal refinement} if it holds for all $( x_1, x_2)\in R$ and all
$M_1\in \lsem{ \Phi_1( x_1)}$ that there exists $M_2\in \lsem{
  \Phi_2( x_2)}$ such that
\begin{compactitem}
\item $\forall( a, t_1)\in M_1: \exists( a, t_2)\in M_2:( t_1, t_2)\in
  R$,
\item $\forall( a, t_2)\in M_2: \exists( a, t_1)\in M_1:( t_1, t_2)\in
  R$.
\end{compactitem}
Again, we write $\cE_1\mr \cE_2$ whenever there exists 
such a modal refinement $R$ such that for all $x^0_1\in X^0_1$,
there exists $x^0_2\in X^0_2$ for which $( x^0_1, x^0_2)\in R$.
Sets of implementations and thorough refinement are defined as for DMTS.

\subsection{Structural equivalence}

We proceed to show that the three formalisms introduced in this section
are structurally equivalent.  Using the translations between AA and DMTS
discovered in~\cite{DBLP:conf/concur/BenesDFKL13} and new translations
between AA and hybrid logic, we show that these respect modal
refinement.

The translations $\bl$, $\lb$ between AA and our hybrid logic are
straightforward: For an AA $\cA=( S, S^0, \Tran)$ and all $s\in S$, let
\begin{equation*}
  \Phi( s)= \biglor_{ M\in \Tran( s)} \Big( \bigland_{( a, t)\in M}
  \langle a\rangle t\land \bigland_{( b, u)\notin M} \neg \langle
  b\rangle u\Big)
\end{equation*}
and define the $\cL$-expression $\bl( \cA)=( S, S^0, \Phi)$.

For an $\cL$-expression $\cE=( X, X^0, \Phi)$ and all $x\in X$, let
$\Tran( x)= \lsem{ \Phi( x)}$ and define the AA $\lb( \cE)=( X, X^0,
\Tran)$.

The translations $\db$, $\bd$ between DMTS and AA were discovered
in~\cite{DBLP:conf/concur/BenesDFKL13}.  For a DMTS $\cD=( S, S^0,
\omay, \omust)$ and all $s\in S$, let
\begin{equation*}
  \Tran(s)=\{ M\subseteq \Sigma\times S\mid \forall (a,t)\in M: s\may{a}
  t, \forall s\must{} N: N\cap M\ne \emptyset\}
\end{equation*}
and define the AA $\db( \cD)=( S, S^0, \Tran)$.\footnote{Note that there
  is an error in the corresponding formula
  in~\cite{DBLP:conf/concur/BenesDFKL13}.}

For an AA $\cA=( S, S^0, \Tran)$, define the DMTS $\bd( \cA)=( D, D^0,
\omay, \omust)$ as follows:
\begin{align*}
  D &= \{ M\in \Tran( s)\mid s\in S\} \\
  D^0 &= \{ M^0\in \Tran( s^0)\mid s^0\in S^0\} \\
  \omust &= \big\{\big( M,\{( a, M')\mid M'\in \Tran(
  t)\}\big)\bigmid( a, t)\in M\big\} \\
  \omay &= \{( M, a, M')\mid \exists M\must{} N: ( a, M')\in N\}
\end{align*}
Note that the state spaces of $\cA$ and $\bd( \cA)$ are not the same;
the one of $\bd( \cA)$ may be exponentially larger.  The following lemma
shows that this explosion is unavoidable:

\begin{lemma}
  \label{le:bfstodmtsblowup}
  There exists a one-state AA $\cA$ for which any DMTS $\cD\treq \cA$
  has at least $2^{ n- 1}$ states, where $n$ is the size of the alphabet
  $\Sigma$.
\end{lemma}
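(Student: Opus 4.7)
The plan is to exhibit a one-state AA $\cA$ whose implementations are constrained by a parity condition that any equivalent DMTS can only enforce by devoting one state to each admissible out-set, and to make this state-counting precise via a local invariant.

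Take $\Sigma=\{a_1,\ldots,a_n\}$ and set $F=\{M\subseteq\Sigma\mid |M|\text{ even}\}$, so $|F|=2^{n-1}$. Let $\cA=(\{s\},\{s\},\Tran)$ with $\Tran(s)=F$; the implementations of $\cA$ are then exactly the LTS in which every reachable state has an even-sized outgoing label set. For each $M\in F$, the single-state LTS $\cI_M$ with a self-loop per letter of $M$ lies in $\sem{\cA}$.

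Now fix any DMTS $\cD=(S,S^0,\omay,\omust)\treq\cA$. For every $d\in S$, I introduce the set of initial out-sets hosted by $d$,
\[
\lambda(d)=\{M\subseteq\Sigma\mid\exists\,\cI\mr d\text{ whose initial state has outgoing label-set exactly }M\}.
\]
Each $\cI_M$ must refine some $d^0\in S^0$, so $F\subseteq\bigcup_{d^0\in S^0}\lambda(d^0)$. The lemma will therefore follow once I prove $|\lambda(d^0)|\le 1$ for every $d^0\in S^0$, since then $|S|\ge|S^0|\ge|F|=2^{n-1}$.

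The heart of the argument — and the step I expect to be the main obstacle — is to show that $\lambda(d)$ is \emph{convex} in the subset lattice (if $M_1\subseteq M\subseteq M_2$ with $M_1,M_2\in\lambda(d)$, then $M\in\lambda(d)$) and closed under binary union. Both facts will come from a pasting construction: given witnesses $\cI_1,\cI_2\mr d$ with initial out-sets $M_1,M_2$ and a target $M$, I build an implementation $\cI$ whose new initial state carries an $a$-transition for each $a\in M$, routed to an $a$-successor of $\cI_1$'s initial state when $a\in M_1$ and to one of $\cI_2$'s otherwise; the remainder of $\cI$ is the disjoint union of the reachable parts of $\cI_1$ and $\cI_2$. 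The may-clause at the new initial state is then satisfied transition-by-transition, and every must-set at $d$ is hit because $M_1\subseteq M$ and $\cI_1$ already supplies witnesses within $M_1$. Transporting the refinement relations from the subtrees is routine but must be done with care to ensure the new relation is a refinement between $\cI$ and the sub-DMTS rooted at $d$.

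Once this is in place, $\lambda(d^0)\subseteq F$, since any $\cI\mr d^0$ lies in $\sem{\cD}=\sem{\cA}$ and hence has even-sized initial out-set. If two distinct $M_1,M_2\in\lambda(d^0)$ existed, union-closure would place $M_1\cup M_2\in\lambda(d^0)$, and applying convexity to $M_1\subseteq M_1\cup\{a\}\subseteq M_1\cup M_2$ for any $a\in(M_1\cup M_2)\setminus M_1$ would force $M_1\cup\{a\}\in\lambda(d^0)\subseteq F$; this contradicts $|M_1\cup\{a\}|=|M_1|+1$ being odd. Hence $|\lambda(d^0)|\le 1$, and the stated bound follows.
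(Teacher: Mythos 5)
Your argument is sound in outline and is, at its core, the same proof as the paper's: the same one-state AA whose transition constraint consists of all even-cardinality sets, the same pigeonhole over the initial states of $\cD$, and the same interval (``convexity'') property used to manufacture an odd out-set and contradict $\sem\cD\subseteq\sem\cA$. The difference is where the interval property is established. The paper proves it syntactically for the transition constraints $\Tran(s)$ of the translation $\db(\cD)$ (Lemma~\ref{le:dmtstobfsspecial}), where it is immediate from the definition of $\Tran(s)$, and then converts between even label sets and elements of $\Tran$ at the initial states; you prove it semantically, for the sets $\lambda(d)$ of realizable initial label sets, by pasting implementations. Your version has the mild advantage that realizability is built into the definition of $\lambda(d)$ (a point the paper's conversion step glosses over), at the price of having to transport refinement relations through the pasting.

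That pasting is the one place where, as written, a step can fail, and it is exactly the point you flagged. You give the new initial state, for each $a\in M$, a \emph{single} $a$-transition to some $a$-successor of $\cI_1$'s (resp.\ $\cI_2$'s) initial state. With an arbitrary choice the must-clause can break: a must-set $N$ of $d$ is matched, via the refinement witnessing $\cI_1\mr d$, by a \emph{specific} transition $i_1^0\must a j_1$ whose target is paired with a specific element of $N$; if your chosen $a$-successor is a different $j'$, the pair $(j',d')$ inherited from $R_1$ need not land in $N$. The repair is cheap: only the \emph{label set} of the new initial state matters for membership in $F$, so route, for every $a\in M_1$, to \emph{all} $a$-successors of $\cI_1$'s initial state (and, for $a\in M\setminus M_1$, to all $a$-successors of $\cI_2$'s); then $R_1\cup R_2\cup\{(i_0,d)\}$ is a modal refinement verbatim. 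Finally, handle the degenerate case $M_2\subsetneq M_1$ (where $(M_1\cup M_2)\setminus M_1=\emptyset$) by swapping the roles of $M_1$ and $M_2$ before picking the letter $a$ — the same silent swap the paper also needs. With these two small adjustments your proof is complete.
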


We notice that LTS are preserved by all translations: for any LTS $\cI$,
$\bl( \cI)= \lb( \cI)= \db( \cI)= \bd( \cI)= \cI$.
In~\cite{DBLP:conf/concur/BenesDFKL13} it is shown that the translations
between AA and DMTS respect sets of implementations, \ie~that $\db(
\cD)\treq \cD$ and $\bd( \cA)\treq \cA$ for all DMTS $\cD$ and all AA
$\cA$.  The next theorem shows that these and the other presented
translations respect modal refinement, hence these formalisms are not
only semantically equivalent, but \emph{structurally equivalent}.

\begin{theorem}
  \label{th:trans-modref}
  For all AA $\cA_1$, $\cA_2$, DMTS $\cD_1, \cD_2$ and $\cL$-expressions
  $\cE_1$, $\cE_2$:
  \begin{enumerate}
  \item $\cA_1\mr \cA_2$ iff $\bl( \cA_1)\mr \bl( \cA_2)$,
  \item $\cE_1\mr \cE_2$ iff $\lb( \cE_1)\mr \lb( \cE_2)$,
  \item $\cD_1\mr \cD_2$ iff $\db( \cD_1)\mr \db( \cD_2)$, and
  \item $\cA_1\mr \cA_2$ iff $\bd( \cA_1)\mr \bd( \cA_2)$.
  \end{enumerate}
\end{theorem}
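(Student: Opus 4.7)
The four equivalences are handled by constructing explicit refinement witnesses in each direction. Items~(1) and~(2) are essentially bookkeeping: both $\bl$ and $\lb$ leave the state space untouched, and from the defining formula one reads off $\lsem{\Phi(s)} = \Tran(s)$, so the modal-refinement clauses for AA and for $\cL$-expressions become word-for-word identical modulo $\Tran$ versus $\lsem{\Phi(\cdot)}$; the same relation $R$ witnesses refinement on either side.

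For item~(3), $\db$ retains the state space, so my plan is to use the same $R\subseteq S_1\times S_2$ in both directions. For $\Rightarrow$, given a DMTS refinement $R$ and $(s_1,s_2)\in R$, $M_1\in \Tran_1(s_1)$, I build an AA-matching $M_2$ by selecting, for every $(a,t_1)\in M_1$, some $(a,t_2)$ with $s_2\may a t_2$ and $(t_1,t_2)\in R$ (existence by the may-clause of DMTS refinement); to check $M_2\in \Tran_2(s_2)$, I apply the must-clause to each $s_2\must{} N_2$, extracting $(a,t_1)\in N_1\cap M_1$ from $M_1\in \Tran_1(s_1)$ and arranging its chosen partner to lie in $N_2$. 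For $\Leftarrow$, given an AA refinement $R$ of $\db(\cD_1), \db(\cD_2)$, I verify the DMTS may-clause by completing any $s_1\may a t_1$ to an $M_1\in \Tran_1(s_1)$ (join $(a,t_1)$ with one element from each $s_1\must{} N$) and pushing through AA refinement. The DMTS must-clause goes by contradiction: if some $s_2\must{} N_2$ has no matching $s_1\must{} N_1$, then for each $s_1\must{} N_1$ one picks a ``bad'' element witnessing the failure, and these assemble into an $M_1\in \Tran_1(s_1)$ whose AA-partner $M_2\in \Tran_2(s_2)$ must still satisfy $N_2\cap M_2\ne \emptyset$, yielding a pair that contradicts the choice.

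Item~(4) is the most delicate because the state set of $\bd(\cA)$ is the enlarged $D=\{M\in \Tran(s)\mid s\in S\}$. For $\Rightarrow$, I define $(M_1,M_2)\in R'$ iff there exist $(s_1,s_2)\in R$ with $M_1\in \Tran_1(s_1), M_2\in \Tran_2(s_2)$ and the AA matching clauses hold between $M_1,M_2$; the DMTS may-clause of $R'$ then falls out of AA condition~(a) combined with AA refinement at the child pair, and the DMTS must-clause from~(b) and the same inductive step. For $\Leftarrow$, I define $(s_1,s_2)\in R$ to record that every $M_1\in \Tran_1(s_1)$ has an $R'$-partner $M_2\in \Tran_2(s_2)$, and I verify the AA matching clauses at such a pair. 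AA condition~(b) is read off the DMTS must-clause applied to $M_2\must{}\{(a,M_2')\mid M_2'\in \Tran_2(t_2)\}$, which returns a matching must-transition of $M_1$ coming from some $(a,t_1)\in M_1$ with $(t_1,t_2)\in R$. The main obstacle is AA condition~(a): a single DMTS may-match for $M_1\may a M_1'$ gives an $(a,t_2)$ that a priori depends on the choice of $M_1'\in \Tran_1(t_1)$, so one must exploit the coexistence of the several must-transitions at $M_2$ to pick a single $t_2$ uniform over all $M_1'$.
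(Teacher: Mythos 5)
Your items (1), (2), and the reverse direction of (3) follow the paper's own route, but in the forward direction of (3) your construction of $M_2$ -- ``selecting, for every $(a,t_1)\in M_1$, some $(a,t_2)$'' -- does not work as stated. A single element of $M_1$ may be the only one available to hit several different must-sets of $s_2$ whose admissible partners are disjoint, so no single chosen partner can be ``arranged to lie in $N_2$'' for all of them. Concretely, let $s_2\may a t_2$, $s_2\may a t_2'$ with $s_2\must{}\{(a,t_2)\}$ and $s_2\must{}\{(a,t_2')\}$, and $s_1\may a t_1$ with $s_1\must{}\{(a,t_1)\}$, where $R$ relates $t_1$ to both $t_2$ and $t_2'$; for $M_1=\{(a,t_1)\}$ neither singleton choice lies in $\Tran_2(s_2)$. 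The AA clauses do not require $M_2$ to be functional over $M_1$, so the repair is what the paper does: take $M_2$ to be the set of \emph{all} $(a,t_2)$ with $s_2\may a t_2$ and $(t_1,t_2)\in R$ for some $(a,t_1)\in M_1$.

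The serious problem is the reverse direction of (4). You correctly isolate the crux -- for $(a,t_1)\in M_1$ one needs a single $(a,t_2)\in M_2$ such that \emph{every} $M_1'\in\Tran_1(t_1)$ has an $R$-partner in $\Tran_2(t_2)$ -- but you only announce that one ``must exploit the coexistence of the several must-transitions at $M_2$'' without giving the argument, and in fact no such argument exists: the implication $\bd(\cA_1)\mr\bd(\cA_2)\Rightarrow\cA_1\mr\cA_2$ fails. Take $\Sigma=\{a,b,c\}$, let $\cA_2$ have initial state $s_2$ with $\Tran_2(s_2)=\{\{(a,u),(a,v)\}\}$, $\Tran_2(u)=\{\{(b,z)\}\}$, $\Tran_2(v)=\{\{(c,z)\}\}$, $\Tran_2(z)=\{\emptyset\}$, and let $\cA_1$ have initial state $s_1$ with $\Tran_1(s_1)=\{\{(a,t),(a,x),(a,y)\}\}$, $\Tran_1(t)=\{\{(b,w)\},\{(c,w)\}\}$, $\Tran_1(x)=\{\{(b,w)\}\}$, $\Tran_1(y)=\{\{(c,w)\}\}$, $\Tran_1(w)=\{\emptyset\}$. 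Any AA refinement would have to relate $t$ to $u$ or to $v$, but $(t,u)$ fails on the constraint $\{(c,w)\}\in\Tran_1(t)$ and $(t,v)$ fails on $\{(b,w)\}$, so $\cA_1\not\mr\cA_2$. Yet, writing $M_1=\{(a,t),(a,x),(a,y)\}$, $M_2=\{(a,u),(a,v)\}$, $A=\{(b,w)\}$, $B=\{(c,w)\}$, $U=\{(b,z)\}$, $V=\{(c,z)\}$, the relation $\{(M_1,M_2),(A,U),(B,V),(\emptyset,\emptyset)\}$ is a DMTS modal refinement of the translations: the two musts of $M_2$ are matched by the singleton musts $\{(a,A)\}$ and $\{(a,B)\}$ of $M_1$ generated by $x$ and $y$, while the must $\{(a,A),(a,B)\}$ generated by $t$ -- the one that would need your uniform $t_2$ -- is never consulted. (Both AA have the same implementations, so this is consistent with thorough equivalence; it is only the modal-refinement claim that breaks.) Be aware that the paper's own proof glosses over exactly this step: from the may-match of one $M_1'\in\Tran_1(t_1)$ it fixes $t_2$ and then asserts that every $M_1''\in\Tran_1(t_1)$ has an $R$-partner in $\Tran_2(t_2)$, which does not follow, since the must-transition of $M_1$ that the DMTS must-clause matches against $N_2$ need not be the one generated by $(a,t_1)$. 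So the gap you flagged is genuine, it is not closed by the paper either, and your plan for (4) cannot be completed without strengthening the hypotheses (e.g.\ determinism or single-constraint states) or changing the definitions.
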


\begin{proof}[sketch]
  We give a few hints about the proofs of the equivalences; the details
  can be found in appendix.  The first two equivalences follow easily
  from the definitions, once one notices that for both translations,
  $\lsem{ \Phi( x)}= \Tran( x)$ for all $x\in X$.  For the third
  equivalence, we can show that a DMTS modal refinement $\cD_1\mr \cD_2$
  is also an AA modal refinement $\db( \cD_1)\mr \db( \cD_2)$ and vice
  versa.

  The fourth equivalence is slightly more tricky, as the state space
  changes.  If $R\subseteq S_1\times S_2$ is an AA modal refinement
  relation witnessing $\cA_1\mr \cA_2$, then we can construct a DMTS
  modal refinement $R'\subseteq D_1\times D_2$, which witnesses $\bd(
  \cA_1)\mr \bd( \cA_2)$, by
  \begin{multline*}
    R'= \{( M_1, M_2)\mid \exists( s_1, s_2)\in R: M_1\in \Tran_1( s_1),
    M_2\in \Tran( s_2), \\
    \begin{aligned}
      & \forall( a, t_1)\in M_1: \exists( a, t_2)\in M_2:( t_1, t_2)\in
      R, \\
      & \forall( a, t_2)\in M_2: \exists( a, t_1)\in M_1:( t_1, t_2)\in
      R \}.
    \end{aligned}
  \end{multline*}
  Conversely, if $R\subseteq D_1\times D_2$ is a DMTS modal refinement
  witnessing $\bd( \cA_1)\mr \bd( \cA_2)$, then
  $R'\subseteq S_1\times S_2$ given by
  \begin{equation*}
    R'=\{( s_1, s_2)\mid \forall M_1\in \Tran_1( s_1): \exists M_2\in
    \Tran_2( s_2):( M_1, M_2)\in R\}
  \end{equation*}
  is an AA modal refinement. \qed
\end{proof}

The result on thorough equivalence
from~\cite{DBLP:conf/concur/BenesDFKL13} now easily follows:

\begin{corollary}
  \label{co:thorough1}
  For all AA $\cA$, DMTS $\cD$ and $\cL$-expressions $\cE$, $\bl(
  \cA)\treq \cA$, $\lb( \cE)\treq \cE$, $\db( \cD)\treq \cD$, and $\bd(
  \cA)\treq \cA$. \noproof
\end{corollary}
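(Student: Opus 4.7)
The plan is to combine Theorem~\ref{th:trans-modref} with the fact, noted just before the theorem, that every one of the four translations fixes LTS: for any LTS $\cI$, we have $\bl(\cI)=\lb(\cI)=\db(\cI)=\bd(\cI)=\cI$. Since in all three formalisms (AA, DMTS, $\cL$-expressions) the implementations are exactly the LTS, this lets us transfer modal refinement across formalisms while keeping the implementation on the left unchanged.

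Concretely, for the first claim I would argue: an LTS $\cI$ belongs to $\sem{\cA}$ iff $\cI\mr \cA$, which by Theorem~\ref{th:trans-modref}(1) is equivalent to $\bl(\cI)\mr \bl(\cA)$, which by $\bl(\cI)=\cI$ is the same as $\cI\mr \bl(\cA)$, i.e., $\cI\in \sem{\bl(\cA)}$. Hence $\sem{\cA}=\sem{\bl(\cA)}$ and therefore $\bl(\cA)\treq \cA$. The remaining three statements follow by the same two-line argument, invoking parts (2), (3), and (4) of Theorem~\ref{th:trans-modref} respectively and using $\lb(\cI)=\cI$, $\db(\cI)=\cI$, and $\bd(\cI)=\cI$.

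There is essentially no obstacle here: the only point that deserves a moment's care is that thorough equivalence across different formalisms is well-defined, which is exactly the observation made in the subsection on AA that LTS are common implementations of all three formalisms, so $\sem{\cdot}$ lives in a single universe (the LTS over $\Sigma$). Given that, the corollary is immediate from Theorem~\ref{th:trans-modref} and thus no separate appeal to Proposition~\ref{pr:mrvstr} or to the results of~\cite{DBLP:conf/concur/BenesDFKL13} is needed.
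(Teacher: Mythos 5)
Your proof is correct and is exactly the argument the paper intends: the corollary is stated without proof as an easy consequence of Theorem~\ref{th:trans-modref}, and the route you take --- combining the theorem with the observation that all four translations fix LTS, so that $\cI\in\sem{\cS}$ iff $\cI\mr\cS$ transfers across the translation --- is the intended derivation. No gaps.
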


Also soundness of modal refinement for AA and hybrid logic follows
directly from Theorem~\ref{th:trans-modref}:

\begin{corollary}
  For all AA $\cA_1$ and $\cA_2$, $\cA_1\mr \cA_2$ implies $\cA_1\tr
  \cA_2$.  For all $\cL$-expressions $\cE_1$ and $\cE_2$, $\cE_1\mr
  \cE_2$ implies $\cE_1\tr \cE_2$. \noproof
\end{corollary}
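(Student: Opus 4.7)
The plan is to reduce both soundness claims to the analogous DMTS statement, Proposition~\ref{pr:mrvstr}, by routing through the structure-preserving translations set up in Theorem~\ref{th:trans-modref} and the implementation-preserving equivalences in Corollary~\ref{co:thorough1}. Since those two results together say that the translations respect both modal refinement and the set of implementations, the whole argument is a short diagram chase.

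For the AA case, given $\cA_1 \mr \cA_2$, I would proceed in three steps. First, apply Theorem~\ref{th:trans-modref}(4) to obtain the DMTS refinement $\bd(\cA_1) \mr \bd(\cA_2)$. Second, apply Proposition~\ref{pr:mrvstr} at the level of DMTS to conclude $\bd(\cA_1) \tr \bd(\cA_2)$, i.e.~$\sem{\bd(\cA_1)} \subseteq \sem{\bd(\cA_2)}$. Third, invoke $\bd(\cA_i) \treq \cA_i$ from Corollary~\ref{co:thorough1} to replace the left- and right-hand sides, yielding $\sem{\cA_1} \subseteq \sem{\cA_2}$, hence $\cA_1 \tr \cA_2$. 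This implicitly uses that AA and DMTS implementations are both just LTS, which was noted in the text just before Subsection 2.3, so that thorough refinement across formalisms is well-defined.

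For the $\cL$-expression case I would follow exactly the same pattern, either via $\lb$ alone or by composing $\lb$ with $\bd$. Concretely, from $\cE_1 \mr \cE_2$, Theorem~\ref{th:trans-modref}(2) gives $\lb(\cE_1) \mr \lb(\cE_2)$; the AA case just established then yields $\lb(\cE_1) \tr \lb(\cE_2)$; and $\lb(\cE_i) \treq \cE_i$, again from Corollary~\ref{co:thorough1}, transfers the implementation-set inclusion back to $\cE_1 \tr \cE_2$.

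I do not anticipate any real obstacle, since all the substantive work lives in the preceding Theorem~\ref{th:trans-modref} and Corollary~\ref{co:thorough1}; the present statement genuinely is a corollary, extracted by composing ``modal refinement $\Rightarrow$ thorough refinement'' for DMTS with the fact that all the translations preserve both relations. The only point that deserves a line of justification is the implicit uniformity of the implementation semantics across AA, $\cL$-expressions and DMTS, which the paper has already established.
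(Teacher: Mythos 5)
Your proposal is correct and follows essentially the route the paper intends: the corollary is stated without proof precisely because it ``follows directly from Theorem~\ref{th:trans-modref}'' combined with Corollary~\ref{co:thorough1} and Proposition~\ref{pr:mrvstr}, which is exactly your diagram chase through $\bd$ (and $\lb$ for $\cL$-expressions). Your remark about the uniformity of implementation semantics across the formalisms is the same observation the paper makes just before Subsection~2.3, so nothing is missing.
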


\section{The Modal $\nu$-Calculus}
\label{se:mu}

We wish to extend the structural equivalences of the previous section to
the modal $\nu$-calculus.  Using translations between AA, DMTS and
$\nu$-calculus based on work
in~\cite{DBLP:conf/avmfss/Larsen89,DBLP:journals/tcs/BoudolL92}, it has
been shown in~\cite{DBLP:conf/concur/BenesDFKL13} that $\nu$-calculus
and DMTS/AA are \emph{semantically} equivalent.  We will see below that
there is a \emph{mismatch} between the translations
from~\cite{DBLP:conf/concur/BenesDFKL13} (and hence between the
translations
in~\cite{DBLP:conf/avmfss/Larsen89,DBLP:journals/tcs/BoudolL92}) which
precludes structural equivalence and then proceed to propose a new
translation which fixes the mismatch.

\subsection{Syntax and semantics}

We first recall the syntax and semantics of the modal $\nu$-calculus,
the fragment of the modal
$\mu$-calculus~\cite{unpub/ScottB69,DBLP:journals/tcs/Kozen83} with only
maximal fixed points.  Instead of an explicit maximal fixed point
operator, we use the representation by equation systems in
Hennessy-Milner logic developed in~\cite{DBLP:journals/tcs/Larsen90}.

For a finite set $X$ of variables, let $\HML( X)$ be the set of
\emph{Hennessy-Milner formulae}, generated by the abstract syntax $\HML(
X)\ni \phi\Coloneqq \ttt\mid \fff\mid x\mid \langle a\rangle \phi\mid[
a] \phi\mid \phi\land \phi\mid \phi\lor \phi$, for $a\in \Sigma$ and
$x\in X$.

A \emph{declaration} is a mapping $\Delta: X\to \HML( X)$; we recall the
maximal fixed point semantics of declarations
from~\cite{DBLP:journals/tcs/Larsen90}.  Let $( S, S^0, \omust)$ be an
LTS, then an \emph{assignment} is a mapping $\sigma: X\to 2^S$.  The set
of assignments forms a complete lattice with order $\sigma_1\sqsubseteq
\sigma_2$ iff $\sigma_1( x)\subseteq \sigma_2( x)$ for all $x\in X$ and
lowest upper bound $\big(\bigsqcup_{ i\in I} \sigma_i\big)( x)=
\bigcup_{ i\in I} \sigma_i( x)$.

The semantics of a formula is a subset of $S$, given relative to an
assignment~$\sigma$, defined as follows: $\lsem \ttt \sigma= S$, $\lsem
\fff \sigma= \emptyset$, $\lsem x \sigma= \sigma( x)$, $\lsem{ \phi\land
  \psi} \sigma= \lsem \phi\sigma\cap \lsem \psi \sigma$, $\lsem{
  \phi\lor \psi} \sigma= \lsem \phi\sigma\cup \lsem \psi \sigma$, and
\begin{align*}
  \lsem{\langle a\rangle \phi} \sigma &= \{ s\in S\mid \exists s\must a
  s': s'\in \lsem \phi \sigma\}, \\
  \lsem{[ a] \phi} \sigma &= \{ s\in S\mid \forall s\must a s': s'\in
  \lsem \phi \sigma\}.
\end{align*}
The semantics of a declaration $\Delta$ is then the assignment defined
by
\begin{equation*}
  \lsem \Delta= \bigsqcup\{ \sigma: X\to 2^S\mid \forall x\in X:
  \sigma( x)\subseteq \lsem{ \Delta( x)} \sigma\};
\end{equation*}
the maximal (pre)fixed point of $\Delta$.

A \emph{$\nu$-calculus expression} is a structure $\cN=( X, X^0,
\Delta)$, with $X^0\subseteq X$ sets of variables and $\Delta: X\to
\HML( X)$ a declaration.  We say that an LTS $\cI=( S, S^0, \omust)$
\emph{implements} (or models) the expression, and write $\cI\models
\cN$, if it holds that for all $s^0\in S^0$, there is $x^0\in X^0$ such
that $s^0\in \lsem \Delta( x^0)$.  We write $\sem \cN$ for the set of
implementations (models) of a $\nu$-calculus expression $\cN$.  As for
DMTS, we write $\sem x= \sem{( X,\{ x\}, \Delta)}$ for $x\in X$, and
thorough refinement of expressions and states is defined accordingly.

The following lemma introduces a \emph{normal form} for $\nu$-calculus
expressions:

\begin{lemma}
  \label{le:hmlnormal}
  For any $\nu$-calculus expression $\cN_1=( X_1, X^0_1, \Delta_1)$,
  there exists another expression $\cN_2=( X_2, X^0_2, \Delta_2)$ with
  $\sem{ \cN_1}= \sem{ \cN_2}$ and such that for any $x\in X$,
  $\Delta_2( x)$ is of the form
  \begin{equation}
    \label{eq:hmlnormal}
    \Delta_2( x)= \bigland_{ i\in I}\Big( \biglor_{ j\in
      J_i} \langle a_{ ij}\rangle  x_{ ij}\Big)\land \bigland_{ a\in
      \Sigma}[ a] \Big( \biglor_{ j\in J_a} y_{ a, j}\Big)
  \end{equation}
  for finite (possibly empty) index sets $I$, $J_i$, $J_a$, for $i\in I$
  and $a\in \Sigma$, and all $x_{ ij}, y_{ a, j}\in X_2$.  Additionally,
  for all $i\in I$ and $j\in J_i$, there exists $j'\in J_{ a_{ ij}}$ for
  which $x_{ ij}\tr y_{ a_{ ij}, j'}$.
\end{lemma}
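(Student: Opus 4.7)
The plan is to route $\cN_1$ through the existing semantic translations between the modal $\nu$-calculus and DMTS from~\cite{DBLP:conf/concur/BenesDFKL13}: first translate $\cN_1$ to an equivalent DMTS $\cD$, and then translate $\cD$ back to a $\nu$-calculus expression. The return trip will naturally land in exactly the shape required by~\eqref{eq:hmlnormal}, and the DMTS well-formedness axiom delivers the side condition for free.

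First I would apply the $\nu$-calculus-to-DMTS translation of~\cite{DBLP:conf/concur/BenesDFKL13}, producing a DMTS $\cD=(S, S^0, \omay, \omust)$ with $\sem{\cN_1}=\sem{\cD}$. This passes through an acceptance automaton whose transition sets $\Tran(x)$ collect the $M\subseteq \Sigma\times X_1$ consistent with $\Delta_1(x)$ under the maximal fixed point reading, and then applies $\bd$.

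I would then translate $\cD$ back via the standard DMTS-to-$\nu$-calculus translation (going back to~\cite{DBLP:journals/tcs/BoudolL92,DBLP:conf/avmfss/Larsen89}), taking $X_2=S$, $X_2^0=S^0$, and for every $s\in S$
\begin{equation*}
  \Delta_2(s)= \bigland_{s\must{} N} \Big(\biglor_{(a,t)\in N} \langle a\rangle t\Big) \land \bigland_{a\in \Sigma}[a]\Big(\biglor_{s\may a t} t\Big).
\end{equation*}
This formula matches~\eqref{eq:hmlnormal} verbatim: take $I$ to index the must-transitions $s\must{} N$, set $J_i=N$ and read $(a_{ij}, x_{ij})$ off the pairs in $N$, and let $J_a$ enumerate the may-successors of $s$ for each action $a$. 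The DMTS axiom that every $(a,t)$ in every $N$ with $s\must{} N$ satisfies $s\may a t$ yields the consistency condition essentially for free: $x_{ij}=t$ appears literally among the $y_{a_{ij}, j'}$, so in particular $x_{ij}\tr y_{a_{ij}, j'}$.

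The main obstacle I expect is not the syntactic packaging in the second step, but rather justifying the first step — that the AA with $\Tran(x)=\{M\mid M\models \Delta_1(x)\}$ truly captures $\sem{\cN_1}$. This is where the exponential blow-up flagged after Lemma~\ref{le:bfstodmtsblowup} appears, and where the correctness of the maximal fixed point interpretation must be invoked (as established in~\cite{DBLP:conf/concur/BenesDFKL13}). Once $\sem{\cN_1}=\sem{\cD}=\sem{\cN_2}$ is secured, the remaining verification of the normal-form shape and the side condition is routine.
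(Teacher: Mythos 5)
Your second step is sound as far as it goes: applying $\ddh$ to any DMTS yields a declaration literally of the shape~\eqref{eq:hmlnormal}, the DMTS well-formedness condition (every pair in a must-constraint is also a may-transition) gives the side condition with $x_{ij}$ occurring verbatim among the $y_{a_{ij},j'}$ (hence $x_{ij}\tr y_{a_{ij},j'}$ by reflexivity), and $\ddh(\cD)\treq\cD$ is available from the cited results. The genuine gap is your first step, and it is essentially circular. There is no translation from \emph{arbitrary} $\nu$-calculus expressions to DMTS or AA that you can invoke: both translations in this paper ($\hd$ and $\hdt$) and the one in~\cite{DBLP:conf/concur/BenesDFKL13} are defined only for expressions already in normal form --- indeed that paper first proves its own (slightly different) normal-form lemma and builds its translation on top of it. Moreover, your proposed $\Tran(x)=\{M\subseteq\Sigma\times X_1\mid M\models\Delta_1(x)\}$ is not even well-defined for a general declaration: $\Delta_1(x)$ may contain nested modalities (e.g.\ $\langle a\rangle[b]y$), bare variables at top level (e.g.\ $y\land\langle a\rangle z$), and arbitrary Boolean structure over these, and a one-step set $M\subseteq\Sigma\times X_1$ has no meaning as a ``model'' of such a formula until fresh variables naming the subformulae have been introduced and the nesting flattened --- which is exactly the content of the lemma. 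So the obstacle you flag at the end is not a verification you can delegate to~\cite{DBLP:conf/concur/BenesDFKL13}; it is the whole statement.

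The paper instead does this work directly and syntactically: it invokes Boudol and Larsen's strong normal form for Hennessy--Milner formulae, replaces every subformula occurring under a modality by a new variable with its own declaration, and then eliminates the remaining top-level disjunctions by splitting each variable $x$ into variables $\tilde x^i$, one per disjunct, adjusting the initial-variable set accordingly. If you wish to keep your route through AA/DMTS, you would have to carry out an equivalent flattening of nested modalities, top-level variables and disjunctions before any acceptance automaton can be formed --- at which point you have re-proved the lemma rather than used it.
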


As this is a type of \emph{conjunctive normal form}, it is clear that
translating a $\nu$-calculus expression into normal form may incur an
exponential blow-up.

We introduce some notation for $\nu$-calculus expressions in normal form
which will make our life easier later.  Let $\cN=( X, X^0, \Delta)$ be
such an expression and $x\in X$, with $\Delta( x)= \bigland_{ i\in
  I}\big( \biglor_{ j\in J_i} \langle a_{ ij}\rangle x_{ ij}\big)\land
\bigland_{ a\in \Sigma}[ a] \big( \biglor_{ j\in J_a} y_{ a, j}\big)$ as
in the lemma.  Define $\Diamond( x)=\{\{( a_{ ij}, x_{ ij})\mid j\in
J_i\}\mid i\in I\}$ and, for each $a\in \Sigma$, $\Box^a( x)=\{ y_{ a,
  j}\mid j\in J_a\}$.  Note that now $\Delta( x)= \bigland_{ N\in
  \Diamond(x)} \big( \biglor_{( a, y)\in N} \langle a\rangle y\big)
\land \bigland_{ a\in \Sigma}[ a]\big( \biglor_{ y\in \Box^a( x)}
y\big)$.

\subsection{Refinement}

In order to expose our structural equivalence, we need to introduce a
notion of modal refinement for the modal $\nu$-calculus.  For reasons
which will become apparent later, we define two different such notions:

Let $\cN_1=( X_1, X^0_1, \Delta_1)$, $\cN_2=( X_2, X^0_2, \Delta_2)$ be
$\nu$-calculus expressions in normal form and $R\subseteq X_1\times
X_2$.  The relation $R$ is a \emph{modal refinement} if it holds for all
$( x_1, x_2)\in R$ that
\begin{enumerate}
\item \label{en:nucalref.may} for all $a\in \Sigma$ and every $y_1\in
  \Box^a_1( x_1)$, there is $y_2\in \Box^a_2( x_2)$ for which $( y_1,
  y_2)\in R$, and
\item for all $N_2\in \Diamond_2( x_2)$ there is $N_1\in \Diamond_1(
  x_1)$ such that for each $( a, y_1)\in N_1$, there exists $( a,
  y_2)\in N_2$ with $( y_1, y_2)\in R$.
\end{enumerate}
$R$ is a \emph{modal-thorough} refinement if, instead
of~\ref{en:nucalref.may}., it holds that
\begin{enumerate}
\item[1$'$.] for all $a\in \Sigma$, all $y_1\in \Box^a_1( x_1)$ and
  every $y_1'\in X_1$ with $y_1'\tr y_1$, there is $y_2\in \Box^a_2(
  x_2)$ and $y_2'\in X_2$ such that $y_2'\tr y_2$ and $( y_1', y_2')\in
  R$.
\end{enumerate}
We say that $\cN_1$ \emph{refines} $\cN_2$ whenever there exists such a
refinement $R$ such that for every $x^0_1\in X^0_1$ there exists
$x^0_2\in X^0_2$ for which $( x^0_1, x^0_2)\in R$.  We write $\cN_1\mr
\cN_2$ in case of modal and $\cN_1\mtr \cN_2$ in case of modal-thorough
refinement.

We remark that whereas modal refinement for $\nu$-calculus expressions
is a simple and entirely syntactic notion, modal-thorough refinement
involves semantic inclusions of states.  Using results
in~\cite{DBLP:journals/iandc/BenesKLS12}, this implies that modal
refinement can be decided in time polynomial in the size of the
(normal-form) expressions, whereas deciding modal-thorough refinement is
EXPTIME-complete.

\subsection{Translation from DMTS to $\nu$-calculus}

Our translation from DMTS to $\nu$-calculus is new, but similar to the
translation from AA to $\nu$-calculus given
in~\cite{DBLP:conf/concur/BenesDFKL13}.  This in turn is based on the
\emph{characteristic formulae} of~\cite{DBLP:conf/avmfss/Larsen89} (see
also~\cite{books/AcetoILS07}).

For a DMTS $\cD=( S, S^0, \omay, \omust)$ and all $s\in S$, we define
$\Diamond(s)=\{ N  \mid s \must{} N\}$ and, for each $a\in \Sigma$,
$\Box^a(s)=\{ t \mid s \may{a} t\}$.  Then, let
\begin{equation*}
  \Delta( s)= \bigland_{N \in \Diamond(s)}\Big( \biglor_{( a, t)\in N}
  \langle a\rangle t\Big)\land \bigland_{ a\in \Sigma}[ a]\Big(
  \biglor_{t \in \Box^a(s)} t\Big)
\end{equation*}
and define the (normal-form) $\nu$-calculus expression $\ddh( \cD)=( S,
S^0, \Delta)$.

Note how the formula precisely expresses that we demand at least one of
every choice of disjunctive must-transitions (first part) and permit all
may-transitions (second part); this is also the intuition of the
characteristic formulae of~\cite{DBLP:conf/avmfss/Larsen89}.  Using
results of~\cite{DBLP:conf/concur/BenesDFKL13} (which introduces a very
similar translation from AA to $\nu$-calculus expressions), we see that
$\ddh( \cD)\treq \cD$ for all DMTS $\cD$.

\begin{theorem}
  \label{th:dmts->nu}
  For all DMTS $\cD_1$ and $\cD_2$, $\cD_1\mr \cD_2$ iff $\ddh(
  \cD_1)\mr \ddh( \cD_2)$.
\end{theorem}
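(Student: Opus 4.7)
The plan is simple: observe that, under the translation $\ddh$, the sets $\Diamond(s)$ and $\Box^a(s)$ defined for the DMTS $\cD$ coincide on the nose with the sets $\Diamond(x)$ and $\Box^a(x)$ used in the normal-form $\nu$-calculus notation of $\ddh(\cD)$. Given this, the two refinement definitions coincide clause-by-clause and the theorem becomes almost a tautology.

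First I would verify that $\ddh(\cD)$ really is a normal-form expression in the sense of Lemma~\ref{le:hmlnormal}, so that the $\nu$-calculus refinement definition (which was stated only for normal-form expressions) even applies on the image side. The conjunction-of-disjunctions-of-diamonds together with box-of-disjunction shape of $\Delta(s)$ is immediate from the definition of $\ddh$. The extra normal-form condition---that for every $(a,t)$ occurring in a disjunctive diamond conjunct there is some $t'\in \Box^a(s)$ with $t\tr t'$---comes for free: by the standing assumption on DMTS, $s\must{}N$ together with $(a,t)\in N$ forces $s\may{a}t$, so $t\in \Box^a(s)$ itself and we may take $t'=t$.

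Second I would unfold both refinement definitions on the same candidate relation $R\subseteq S_1\times S_2$. The DMTS conditions, once rewritten using $\Box^a_i(s_i)$ (the set of $a$-may-successors of $s_i$) and $\Diamond_i(s_i)$ (the set of disjunctive must-choices at $s_i$), read (i) for every $a\in \Sigma$ and every $t_1\in \Box^a_1(s_1)$ there is $t_2\in \Box^a_2(s_2)$ with $(t_1,t_2)\in R$, and (ii) for every $N_2\in \Diamond_2(s_2)$ there is $N_1\in \Diamond_1(s_1)$ such that for every $(a,t_1)\in N_1$ there is $(a,t_2)\in N_2$ with $(t_1,t_2)\in R$. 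These are literally the two clauses defining $\nu$-calculus modal refinement on the normal-form expressions $\ddh(\cD_1)$ and $\ddh(\cD_2)$. The initial-state clause on $S^0_1,S^0_2$ is identical on both sides as well, so $R$ witnesses $\cD_1\mr \cD_2$ if and only if $R$ witnesses $\ddh(\cD_1)\mr \ddh(\cD_2)$.

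The only step that requires any care is the first one---checking normal-form membership of $\ddh(\cD)$---because without it the syntactic $\nu$-calculus refinement defined in the previous subsection is not even formally available on the image side. As indicated, however, this is a one-line consequence of the DMTS well-formedness assumption, so I anticipate no genuine obstacle.
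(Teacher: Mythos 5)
Your proposal is correct and takes essentially the same route as the paper: both unfold the two refinement definitions on the same relation $R$ and observe that, via $\Diamond(s)=\{N\mid s\must{}N\}$ and $\Box^a(s)=\{t\mid s\may{a}t\}$, the clauses (including the initial-state condition) coincide verbatim. Your additional check that $\ddh(\cD)$ satisfies the normal-form condition of Lemma~\ref{le:hmlnormal} (taking $t'=t$ by the DMTS well-formedness assumption) is a detail the paper leaves implicit, but it is correct and does not change the argument.
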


\begin{proof}
  For the forward direction, let $R \subseteq S_1 \times S_2$ be a modal
  refinement between $\cD_1=(S_1,S_1^0, \omay_1, \omust_1)$ and
  $\cD_2=(S_2,S_2^0, \omay_2, \omust_2)$; we show that $R$ is also a
  modal refinement between $\ddh(\cD_1)=(S_1,S_1^0, \Delta_1)$ and
  $\ddh(\cD_2)=(S_2,S_2^0, \Delta_2)$.  Let $(s_1,s_2) \in R$.
  \begin{compactitem}
  \item Let $a \in \Sigma$ and $t_1 \in \Box_1^a(s_1)$, then $s_1
    \may{a}_1 t_1$, which implies that there is $t_2 \in S_2$ for which
    $s_2 \may{a}_2 t_2$ and $(t_1,t_2) \in R$.  By definition of
    $\Box^a_2$, $t_2 \in \Box_2^a(s_2)$.
  \item Let $N_2 \in \Diamond_2(s_2)$, then $s_2 \must{}_2 N_2$, which
    implies that there exists $s_1 \must{}_1 N_1$ such that $\forall
    (a,t_1) \in N_1: \exists (a,t_2) \in N_2:( t_1,t_2) \in R$.  By
    definition of $\Box_1^a$, $N_1 \in \Box_1^a(s_1)$.
  \end{compactitem}

  For the other direction, let $R \subseteq S_1 \times S_2$ be a modal
  refinement between $\ddh(\cD_1)$ and $\ddh(\cD_2)$, we show that $R$
  is also a modal refinement between $\cD_1$ and $\cD_2$.  Let
  $(s_1,s_2) \in R$.
  \begin{compactitem}
  \item For all $s_1 \may{a}_1 t_1$, $t_1 \in \Box_1^a(s_1)$, which
    implies that there is $t_2 \in \Box_2^a(s_2)$ with $(t_1,t_2) \in
    R$, and by definition of $\Box_2^a$, $s_2 \may{a}_2 t_2$.
 
  \item For all $s_2 \must{}_2 N_2$, $N_2 \in \Diamond_2(s_2)$, which
    implies that there is $N_1 \in \Diamond_1(s_1)$ such that $\forall
    (a,t_1) \in N_1: \exists (a,t_2) \in N_2:( t_1,t_2) \in R$, and by
    definition of $\Box_1^a$, $s_1 \must{}_1 N_1$. \qed
  \end{compactitem}
\end{proof}

\subsection{Old translation from $\nu$-calculus to DMTS}

We recall the translation from $\nu$-calculus to DMTS given
in~\cite{DBLP:conf/concur/BenesDFKL13}, which is based on a translation
from Hennessy-Milner formulae (without recursion and fixed points) to
sets of acyclic MTS in~\cite{DBLP:journals/tcs/BoudolL92}.
For a $\nu$-calculus expression $\cN=( X, X^0, \Delta)$ in normal form, let
\begin{gather*}
  \omay=\{( x, a, y')\in X\times \Sigma\times X\mid \exists y\in
  \Box^a( x): y'\tr y\}, \\
  \omust=\{( x, N)\mid x\in X, N\in \Diamond( x)\}.
\end{gather*}
and define the DMTS $\hdt( \cN)=( X, X^0, \omay, \omust)$.

Note how this translates diamonds to disjunctive must-transitions
directly, but for boxes takes semantic inclusions into account: for a
subformula $[ a] y$, may-transitions are created to all variables which
are semantically below $y$.  This is consistent with the interpretation
of formulae-as-properties: $[ a] y$ means ``for any $a$-transition,
$\Delta( y)$ must hold''; but $\Delta( y)$ holds for all variables which
are semantically below $y$.

It follows from results in~\cite{DBLP:conf/concur/BenesDFKL13} (which
uses a slightly different normal form for $\nu$-calculus expressions)
that $\hdt( \cN)\treq \cN$ for all $\nu$-calculus expressions $\cN$.

\begin{theorem}
  \label{th:nu->dmts-t}
  For all $\nu$-calculus expressions, $\cN_1\mtr \cN_2$ iff $\hdt(
  \cN_1)\mr \hdt( \cN_2)$.
\end{theorem}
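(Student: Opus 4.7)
The plan is to prove both directions by direct unfolding of definitions, exploiting the fact that the may-transitions of $\hdt(\cN)$ are designed precisely to match condition 1$'$ of modal-thorough refinement. The key observation is that $x \may{a} y'$ in $\hdt(\cN)$ iff there exists $y \in \Box^a(x)$ with $y' \tr y$, and $x \must{} N$ iff $N \in \Diamond(x)$. Given these identifications, the modal refinement conditions on $\hdt(\cN_1)$ and $\hdt(\cN_2)$ are essentially translations of the modal-thorough conditions on $\cN_1$ and $\cN_2$, so the same relation $R$ will witness both refinements.

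For the forward direction, I take a modal-thorough refinement $R \subseteq X_1 \times X_2$ witnessing $\cN_1 \mtr \cN_2$ and show that $R$ is a DMTS modal refinement $\hdt(\cN_1) \mr \hdt(\cN_2)$. Given $(x_1, x_2) \in R$ and a may-transition $x_1 \may{a}_1 y_1'$, unfolding gives $y_1 \in \Box^a_1(x_1)$ with $y_1' \tr y_1$; applying condition 1$'$ yields $y_2 \in \Box^a_2(x_2)$ and $y_2' \in X_2$ with $y_2' \tr y_2$ and $(y_1', y_2') \in R$, so by definition $x_2 \may{a}_2 y_2'$. For must-transitions, any $x_2 \must{}_2 N_2$ corresponds to $N_2 \in \Diamond_2(x_2)$, and condition 2 gives the required $N_1 \in \Diamond_1(x_1)$, hence $x_1 \must{}_1 N_1$, with the matching condition on elements.

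For the backward direction, I take a DMTS modal refinement $R$ witnessing $\hdt(\cN_1) \mr \hdt(\cN_2)$ and show it is a modal-thorough refinement of the $\nu$-calculus expressions. Given $(x_1, x_2) \in R$, $a \in \Sigma$, $y_1 \in \Box^a_1(x_1)$ and $y_1' \tr y_1$, the definition of $\omay$ in $\hdt(\cN_1)$ yields $x_1 \may{a}_1 y_1'$; the DMTS refinement provides $y_2' \in X_2$ with $x_2 \may{a}_2 y_2'$ and $(y_1', y_2') \in R$, and unfolding this may-transition gives $y_2 \in \Box^a_2(x_2)$ with $y_2' \tr y_2$, which is exactly condition 1$'$. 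Condition 2 for must-transitions again follows verbatim from the DMTS refinement clause.

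No step should pose a real obstacle: the translation $\hdt$ was defined precisely so that the semantic closure $y' \tr y$ in the may-relation encodes the extra quantifier in condition 1$'$, and must-transitions correspond one-to-one to elements of $\Diamond(x)$. The only care needed is to check that the initial-state condition transfers in both directions, but this is immediate because $\hdt$ preserves the state set and initial states. The proof is therefore symmetric and routine once the correspondence between the transition relations and the $\Diamond, \Box^a$ families is stated.
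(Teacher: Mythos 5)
Your proposal is correct and matches the paper's own proof essentially step for step: both directions use the very same relation $R$ and unfold the definitions of $\omay$, $\omust$, $\Box^a$ and $\Diamond$ exactly as the paper does, with your extra remark on the initial-state condition being a harmless (and valid) addition.
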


\begin{proof}
  For the forward direction, let $R \subseteq X_1 \times X_2$ be a
  modal-thorough refinement between $\cN_1=( X_1, X_1^0, \Delta_1)$ and
  $\cN_2=( X_2, X_2^0, \Delta_2)$.  We show that $R$ is also a modal
  refinement between $\hdt(\cN_1)=( X_1, X_1^0, \omay_1, \omust_2)$ and
  $\hdt(\cN_2)=( X_2, X_2^0, \omay_2,\omust_2)$.  Let $( x_1, x_2)\in
  R$.
  \begin{compactitem}
  \item Let $x_1 \may{a}_1 y'_1$. By definition of $\omay_1$, there is
    $y_1 \in \Box_1^a(x_1)$ for which $y'_1 \tr y_1$.  Then by
    modal-thorough refinement, this implies that there exists $y_2 \in
    \Box_2^a(x_2)$ and $y'_2 \in X_2$ such that $y'_2 \tr y_2$ and
    $(y'_1,y'_2) \in R$.  By definition of $\omay_2$ we have $x_2
    \may{a}_2 y'_2$.
  \item Let $x_2 \must{}_2 N_2$, then we have $N_2 \in \Diamond_2(x_2)$.
    By modal-thorough refinement, this implies that there is $N_1 \in
    \Diamond_1(x_1)$ such that $\forall (a,y_1) \in N_1: \exists (a,y_2)
    \in N_2:( y_1,y_2) \in R$.  By definition of $\omust_1$, $x_1
    \must{}_1 N_1$.
  \end{compactitem}

  Now to the proof that $\hdt( \cN_1)\mr \hdt( \cN_2)$ implies
  $\cN_1\mtr \cN_2$.  We have a modal refinement (in the DMTS sense)
  $R\subseteq X_1\times X_2$.  We must show that $R$ is also a
  modal-thorough refinement.  Let $( x_1, x_2)\in R$.
  \begin{compactitem}
  \item Let $a\in \Sigma$, $y_1 \in \Box_1^a(x_1)$ and $y'_1 \in X_1$
    such that $y'_1 \tr y_1$. Then by definition of $\omay_1$, $x_1
    \may{a}_1 y'_1$.  By modal refinement, this implies that there
    exists $x_2 \may{a}_2 y'2$ with $(y'_1,y'_2) \in R$. Finally, by
    definition of $\omay_2$, there exists $y_2 \in \Box_2^a(x_2)$ such
    that $y'_2 \tr y_2$.
  \item Let $N_2 \in \Diamond_2(x_2)$, then by definition of $\omust_2$,
    $x_2 \must{}_2 N_2$.  Then, by modal refinement, this implies that
    there exists $x_1 \must{}_1 N_1$ such that $\forall (a,y_1) \in N_1:
    \exists (a,y_2) \in N_2:( y_1,y_2) \in R$.  By definition of
    $\omust_1$, $N_1 \in \Box_1^a(x_1)$. \qed
  \end{compactitem}
\end{proof}

\subsection{Discussion}

Notice how Theorems~\ref{th:dmts->nu} and~\ref{th:nu->dmts-t} expose a
\emph{mismatch} between the translations: $\ddh$ relates DMTS refinement
to $\nu$-calculus \emph{modal} refinement, whereas $\hdt$ relates it to
\emph{modal-thorough} refinement.  Both translations are well-grounded
in the literature and well-understood,
\cf~\cite{DBLP:conf/concur/BenesDFKL13,DBLP:journals/tcs/BoudolL92,DBLP:conf/avmfss/Larsen89},
but this mismatch has not been discovered up to now.  Given that the
above theorems can be understood as universal properties of the
translations, it means that there is no notion of refinement for
$\nu$-calculus which is consistent with them both.

The following lemma, easily shown by inspection, shows that this
discrepancy is related to the may-completion for DMTS:

\begin{lemma}
  \label{le:mc=hdt*dh}
  For any DMTS $\cD$, $\maycomp{\cD} = \hdt( \dh (\cD))$. \noproof
\end{lemma}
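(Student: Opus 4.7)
The proof is essentially bookkeeping: the two constructions, when composed, unfold to literally the same DMTS. First I would write out $\ddh(\cD) = (S, S^0, \Delta)$ explicitly and read off its $\Diamond$ and $\Box^a$ sets according to the normal-form convention. By the very shape of $\Delta(s)$ prescribed by the definition of $\ddh$, one immediately has
$\Diamond_{\ddh(\cD)}(s) = \{N \mid s \must{} N\}$ and $\Box^a_{\ddh(\cD)}(s) = \{t \mid s \may{a} t\}$,
i.e., these coincide with the original DMTS's disjunctive-must sets and $a$-may-successor sets. One should also briefly verify the side condition of Lemma~\ref{le:hmlnormal}: for any $(a,t) \in N$ with $N \in \Diamond_\cD(s)$, the DMTS consistency assumption forces $s \may{a} t$, hence $t \in \Box^a_\cD(s)$ and $t \tr t$ trivially, so $\ddh(\cD)$ is indeed a legal normal-form expression on which $\hdt$ may be applied.

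Next I would apply $\hdt$ to $\ddh(\cD)$ and match the result against $\maycomp{\cD}$ component by component. The state sets and initial sets in both cases are simply $S$ and $S^0$. The must-relation of $\hdt(\ddh(\cD))$ is $\{(s, N) \mid N \in \Diamond_{\ddh(\cD)}(s)\} = \omust$, which is also the must-relation of $\maycomp{\cD}$ since the may-completion leaves must-transitions unchanged. The may-relation of $\hdt(\ddh(\cD))$ is, by unfolding,
$\{(s, a, t') \mid \exists t \in \Box^a_{\ddh(\cD)}(s): t' \tr t\} = \{(s, a, t') \mid \exists t: s \may{a} t \text{ and } t' \tr t\}$,
which is verbatim the definition of $\omay_\omaycomp$ in $\maycomp{\cD}$. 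Thus $\hdt(\ddh(\cD)) = \maycomp{\cD}$ on the nose.

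There is essentially no obstacle here, which is why the paper flags the lemma as shown by inspection: the content is a syntactic identity between two definitions. The conceptual point worth emphasizing is why the identity holds at all. The translation $\hdt$ was designed so that a box subformula $[a]y$ spawns may-transitions to every variable thoroughly below $y$, and the may-completion $\maycomp{}$ is precisely the closure of the original may-relation under the thorough-refinement preorder on successors. The two closures therefore describe the same set of triples, and equality follows.
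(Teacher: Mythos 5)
Your proof is correct and is exactly what the paper intends: the lemma carries no proof ("easily shown by inspection"), and your component-by-component unfolding of the definitions of $\ddh$, $\hdt$ and the may-completion is precisely that inspection, including the useful check that $\ddh(\cD)$ satisfies the side condition of Lemma~\ref{le:hmlnormal}. The only point you silently identify is that the $\tr$ in the definition of $\hdt$ is thorough refinement between variables of the expression $\ddh(\cD)$, whereas the $\tr$ in $\omay_\omaycomp$ is thorough refinement between states of $\cD$; these preorders coincide because $\ddh(\cD)\treq\cD$ holds state-wise (it is preserved under restricting the initial-state set to a single state), so the claimed on-the-nose equality indeed goes through.
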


As a corollary, we see that modal refinement and modal-thorough
refinement for $\nu$-calculus are incomparable: Referring back to
Example~\ref{ex:mc_prop}, we have $\cD \mr \cD_1$, hence by
Theorem~\ref{th:dmts->nu}, $\ddh(\cD) \mr \ddh(\cD_1)$.  On the other
hand, we know that $\maycomp{\cD} \not\mr \maycomp{\cD_1}$, \ie~by
Lemma~\ref{le:mc=hdt*dh}, $\hdt(\ddh(\cD)) \not\mr \hdt(\ddh(\cD_1))$,
and then by Theorem~\ref{th:nu->dmts-t}, $\ddh(\cD) \not\mtr
\ddh(\cD_1)$.

To expose an example where modal-thorough refinement holds, but modal
refinement does not, we note that $\maycomp{\cD_2} \mr \maycomp{\cD}$
implies, again using Lemma~\ref{le:mc=hdt*dh} and
Theorem~\ref{th:nu->dmts-t}, that $\ddh(\cD_2) \mtr \ddh(\cD)$.  On the
other hand, we know that $\cD_2 \not\mr \cD$, so by
Theorem~\ref{th:dmts->nu}, $\ddh(\cD_2) \not\mr \ddh(\cD)$.

\subsection{New translation from $\nu$-calculus to DMTS}

We now show that the mismatch between DMTS and $\nu$-calculus
expressions can be fixed by introducing a new, simpler translation from
$\nu$-calculus to DMTS.

For a $\nu$-calculus expression $\cN=( X, X^0, \Delta)$ in normal form,
let
\begin{gather*}
  \omay=\{( x, a, y)\in X\times \Sigma\times X\mid y\in \Box^a( x)\}, \\
  \omust=\{( x, N)\mid x\in X, N\in \Diamond( x)\}.
\end{gather*}
and define the DMTS $\hd( \cN)=( X, X^0, \omay, \omust)$.  This is a
simple syntactic translation: boxes are translated to disjunctive
must-transitions and diamonds to may-transitions.

\begin{theorem}
  \label{th:nu->dmts}
  For all $\nu$-calculus expressions, $\cN_1\mr \cN_2$ iff $\hd(
  \cN_1)\mr \hd( \cN_2)$.
\end{theorem}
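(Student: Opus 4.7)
The plan is to mirror the proof of Theorem~\ref{th:dmts->nu}.  The key observation is that $\hd$ is a purely syntactic translation: for a normal-form expression $\cN=( X, X^0, \Delta)$ and any $x, y\in X$, $a\in \Sigma$, $N\subseteq \Sigma\times X$, one has $x\may{a} y$ in $\hd( \cN)$ if and only if $y\in \Box^a( x)$, and $x\must{} N$ in $\hd( \cN)$ if and only if $N\in \Diamond( x)$.  Contrast this with $\hdt$, whose may-transitions are the $\tr$-downward closure of the target variables of the boxes; it is precisely this semantic closure that forced the modal-\emph{thorough} variant in Theorem~\ref{th:nu->dmts-t}, and its absence here is what makes plain modal refinement the right notion.

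Given this correspondence, I will show that the \emph{same} relation $R\subseteq X_1\times X_2$ is a $\nu$-calculus modal refinement from $\cN_1$ to $\cN_2$ if and only if $R$ is a DMTS modal refinement from $\hd( \cN_1)$ to $\hd( \cN_2)$.  For the forward direction, let $R$ be a $\nu$-calculus modal refinement and $( x_1, x_2)\in R$.  A may-transition $x_1\may{a}_1 y_1$ in $\hd( \cN_1)$ corresponds to $y_1\in \Box^a_1( x_1)$, which by condition~\ref{en:nucalref.may} is matched by some $y_2\in \Box^a_2( x_2)$ with $( y_1, y_2)\in R$, hence $x_2\may{a}_2 y_2$; dually, a disjunctive must-transition $x_2\must{}_2 N_2$ corresponds to $N_2\in \Diamond_2( x_2)$, which by the second $\nu$-calculus refinement clause is matched by some $N_1\in \Diamond_1( x_1)$, \ie~$x_1\must{}_1 N_1$, with the requisite pairing of $( a, y_1)\in N_1$ and $( a, y_2)\in N_2$ via $R$.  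The reverse direction is symmetric: unfold $\Box^a$ and $\Diamond$ into the may- and must-transitions of $\hd$, apply the DMTS refinement conditions, and translate back via the same syntactic correspondence.  The quantifier over initial variables carries over unchanged since $\hd$ preserves $X^0$.

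No real obstacle is expected.  Unlike the mismatch exposed in the previous subsection, where $\hdt$ forced a $\tr$-closure into the may-transitions, $\hd$ puts the syntactic structure of a normal-form expression into one-to-one correspondence with the transition structure of the resulting DMTS, so the entire argument is a direct definition-chase, substantially shorter than that of Theorem~\ref{th:nu->dmts-t}.
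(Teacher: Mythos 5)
Your proposal is correct and follows essentially the same route as the paper's proof: both exploit that $\hd$ is purely syntactic, so the very same relation $R\subseteq X_1\times X_2$ serves as a $\nu$-calculus modal refinement and as a DMTS modal refinement, and both directions are a direct definition-chase unfolding $\Box^a$ and $\Diamond$ into may- and must-transitions (with the initial-state condition preserved since $\hd$ keeps $X^0$). Nothing further is needed.
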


\begin{proof}
  Let $R \subseteq X_1 \times X_2$ be a modal refinement between
  $\cN_1=(X_1,X_1^0,\Delta_1)$ and $\cN_2=(X_2,X_2^0,\Delta_2)$; we show
  that $R$ is also a modal refinement between $\hd(\cN_1)=(S_1,S_1^0,
  \omay_1, \omust_1)$ and $\hd(\cN_2)=(S_2,S_2^0, \omay_2, \omust_2)$.
  Let $(x_1,x_2) \in R$.
  \begin{compactitem}
  \item Let $x_1 \may{a}_1 y_1$, then $y_1 \in \Box_1^a(x_1)$, which
    implies that there exists $y_2 \in \Box_2^a(x_2)$ for which
    $(y_1,y_2) \in R$, and by definition of $\omay_2$, $x_2 \may{a}_2
    y_2$.
  \item Let $x_2 \must{}_2 N_2$, then $N_2 \in \Diamond_2(x_2)$, hence
    there is $N_1 \in \Diamond_1(x_1)$ such that $\forall (a,y_1) \in
    N_1: \exists (a,y_2) \in N_2:( y_1,y_2) \in R$, and by definition of
    $\omust_1$, $x_1 \must{}_1 N_1$.
  \end{compactitem}

  Now let $R \subseteq X_1 \times X_2$ be a modal refinement between
  $\hd(\cN_1)$ and $\hd(\cN_2)$, we show that $R$ is also a modal
  refinement between $\cN_1$ and $\cN_2$.  Let $(x_1,x_2) \in R$,
  \begin{compactitem}
  \item Let $a \in \Sigma$ and $y_1 \in \Box_1^a(x_1)$.  Then $x_1
    \may{a}_1 y_1$, which implies that there is $y_2 \in X_2$ for which
    $x_2 \may{a}_2 y_2$ and $(y_1,y_2) \in R$, and by definition of
    $\omay_2$, $t_2 \in \Box_2^a(s_2)$.
  \item Let $N_2 \in \Diamond_2(x_2)$, then $x_2 \must{}_2 N_2$, so
    there is $x_1 \must{}_1 N_1$ such that $\forall (a,y_1) \in N_1:
    \exists (a,y_2) \in N_2:( y_1,y_2) \in R$.  By definition of
    $\omust_1$, $N_1 \in \Box_1^a(x_1)$. \qed
  \end{compactitem}
\end{proof}

We finish the section by proving that also for the syntactic translation
$\hd( \cN)\treq \cN$ for all $\nu$-calculus expressions; this shows that
our translation can serve as a replacement for the partly-semantic
$\hdt$ translation
from~\cite{DBLP:conf/concur/BenesDFKL13,DBLP:journals/tcs/BoudolL92}.
First we remark that $\ddh$ and $\hd$ are inverses to each other:

\begin{proposition}
  \label{pr:id=dh*hd}
  For any $\nu$-calculus expression $\cN$, $\ddh(\hd(\cN))=\cN$; for any
  DMTS $\cD$, $\hd(\ddh(\cD))=\cD$. \noproof
\end{proposition}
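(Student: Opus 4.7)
The plan is to unfold the definitions of $\hd$ and $\ddh$ and observe that both translations are mediated by the common intermediate data $(\Diamond(\cdot),\Box^\bullet(\cdot))$: on the $\nu$-calculus side this is the notation established right after Lemma~\ref{le:hmlnormal}, and on the DMTS side it is literally the sets $\{N\mid s\must{} N\}$ of disjunctive-must choices and $\{t\mid s\may{a} t\}$ of $a$-successors. Once this is noticed, the proposition asserts only that passing through either formalism leaves these data intact.

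For $\ddh(\hd(\cN))=\cN$, I would fix $\cN=(X,X^0,\Delta)$ in normal form and let $\hd(\cN)=(X,X^0,\omay,\omust)$. By the definition of $\hd$, the resulting DMTS satisfies $\{N\mid x\must{} N\}=\Diamond(x)$ and $\{y\mid x\may{a} y\}=\Box^a(x)$ for every $x\in X$ and $a\in\Sigma$. Feeding these into the formula defining $\ddh$ reconstructs exactly the conjunction-of-diamonds-and-boxes presentation of $\Delta(x)$, so the declaration produced coincides with $\Delta$ term by term; the state set and initial states are preserved by construction.

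For $\hd(\ddh(\cD))=\cD$, I would fix $\cD=(S,S^0,\omay,\omust)$ and let $\ddh(\cD)=(S,S^0,\Delta)$. Here the $\Diamond$- and $\Box^a$-data of the normal-form $\Delta(s)$ are, by the definition of $\ddh$, exactly $\{N\mid s\must{} N\}$ and $\{t\mid s\may{a} t\}$. Applying $\hd$ to this expression then produces $\omust'=\omust$ and $\omay'=\omay$ verbatim, while the state set and initial states are again preserved.

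The only mildly delicate point I anticipate is a well-formedness check: for $\hd(\cN)$ to be a legitimate DMTS, any $(a,y)$ appearing in some $N$ with $(x,N)\in\omust$ must satisfy $y\in\Box^a(x)$. Lemma~\ref{le:hmlnormal} only supplies $y\tr y'$ for some $y'\in\Box^a(x)$. This is resolved by noting that whenever $y\tr y'$ and $y'$ already occurs in the top-level box-disjunction for $a$, adjoining $y$ to that disjunction leaves the semantics of $\Delta(x)$ unchanged; we may therefore take the normal form to satisfy $y\in\Box^a(x)$ directly without loss of generality, after which both equalities reduce to a straightforward unwinding of definitions.
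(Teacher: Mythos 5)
Your main line — both $\ddh$ and $\hd$ are mere re-packagings of the data $(\Diamond,\Box^a)$, so either round trip is the identity on states, initial states and declarations — is exactly the proof the paper has in mind (the proposition is stated without proof for this reason). For the direction $\hd(\ddh(\cD))=\cD$ there is indeed nothing further to check: the DMTS consistency condition even guarantees that every diamond target of $\ddh(\cD)$ already lies in the corresponding box set, so no repair is needed there.

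The weak point is your ``resolved by noting'' step. First, it is not needed for the equalities themselves: $\ddh$ only reads off $\omay$ and $\omust$, so $\ddh(\hd(\cN))=\cN$ holds by pure unfolding whether or not $\hd(\cN)$ satisfies the consistency requirement on DMTS; what your repair really addresses is well-definedness of $\hd$ on the normal form of Lemma~\ref{le:hmlnormal} (which only guarantees $x_{ij}\tr y_{a_{ij},j'}$, not membership) — a legitimate concern, but one about the definition of $\hd$, not about this proposition. Second, and more importantly, the claim that adjoining $y$ to the box disjunction ``leaves the semantics of $\Delta(x)$ unchanged'' is not a mere observation: $y\tr y'$ means $\sem{y}\subseteq\sem{y'}$ \emph{with respect to the declaration you are about to modify}; after weakening $\Delta(x)$ the greatest fixed point can only grow, and to push an implementation of the modified expression back to the original you would need the containment between the values of $y$ and $y'$ at the \emph{new} fixed point, which does not follow automatically (the relation $\tr$ is entangled with the declaration, so the ``without loss of generality'' is circular as stated). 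An honest justification is an argument of the same shape as the proof of Proposition~\ref{pr:maycomp} — adding may-transitions towards thoroughly smaller states preserves implementation sets, shown there via the largest refinement relation — transported through the translations, or equivalently a re-proof of Lemma~\ref{le:hmlnormal} establishing the stronger condition $x_{ij}\in\Box^{a_{ij}}(x)$ directly. Note also that after such a strengthening you have replaced $\cN$ by a different (semantically equivalent) expression, so strictly you prove the identity for the strengthened normal form; since $\hd$ is only defined on normal forms this is an acceptable reading, but it should be said explicitly rather than folded into a ``without loss of generality.''
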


\begin{corollary}
  \label{co:hd-treq}
  For all $\nu$-calculus expressions $\cN$, $\hd( \cN)\treq \cN$. \noproof
\end{corollary}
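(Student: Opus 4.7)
The proof is essentially a one-line composition of already-established facts, so there is no real obstacle; the only thing to do is assemble the pieces in the right order.

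The plan is as follows. I recall that the paper has already noted (using results of \cite{DBLP:conf/concur/BenesDFKL13}) that $\ddh(\cD) \treq \cD$ for every DMTS $\cD$. I also have Proposition~\ref{pr:id=dh*hd}, which states that $\ddh$ and $\hd$ are mutual inverses, and in particular $\ddh(\hd(\cN)) = \cN$ for every $\nu$-calculus expression $\cN$ in normal form.

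Given an arbitrary $\nu$-calculus expression $\cN$, I first put it in normal form using Lemma~\ref{le:hmlnormal}, which preserves $\sem{\cdot}$ and hence $\treq$. I may therefore assume $\cN$ is in normal form. Now I specialize the identity $\ddh(\cD) \treq \cD$ to the DMTS $\cD = \hd(\cN)$, obtaining $\ddh(\hd(\cN)) \treq \hd(\cN)$. Substituting $\ddh(\hd(\cN)) = \cN$ from Proposition~\ref{pr:id=dh*hd} on the left-hand side yields $\cN \treq \hd(\cN)$, which is the desired statement.

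The only subtlety worth noting is that $\treq$ across formalisms is well-defined because implementations of DMTS and of $\nu$-calculus expressions are both just LTS (as remarked in the excerpt), so $\sem{\hd(\cN)}$ and $\sem{\cN}$ live in the same universe and can be compared directly.
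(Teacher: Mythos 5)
Your argument is correct and is exactly the intended one: the paper leaves the corollary unproved precisely because it follows by applying $\ddh(\cD)\treq\cD$ (stated earlier via~\cite{DBLP:conf/concur/BenesDFKL13}) to $\cD=\hd(\cN)$ and then using Proposition~\ref{pr:id=dh*hd} to replace $\ddh(\hd(\cN))$ by $\cN$. Your extra remark about first normalizing via Lemma~\ref{le:hmlnormal} (since $\hd$ is only defined on normal-form expressions) and about $\treq$ being comparable across formalisms is a harmless and accurate clarification, not a deviation.
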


\section{The Modal $\nu$-Calculus as a Specification Theory}
\label{se:spectheory}

Now that we have exposed a close structural correspondence between the
modal $\nu$-calculus and DMTS, we can transfer the operations which make
DMTS a complete specification theory to the $\nu$-calculus.

\subsection{Refinement and implementations}

As for DMTS and AA, we can define an embedding of LTS into the modal
$\nu$-calculus so that implementation $\models$ and refinement $\mr$
coincide.  We say that a $\nu$-calculus expression $( X, X^0, \Delta)$
in normal form is an \emph{implementation} if $\Diamond( x)=\{\{( a,
y)\}\mid y\in \Box^a( x), a\in \Sigma\}$ for all $x\in X$.

The $\nu$-calculus translation of a LTS $( S, S^0, \omust)$ is the
expression $( S, S^0, \Delta)$ in normal form with $\Diamond( s)=\{\{(
a, t)\}\mid s\must a t\}$ and $\Box^a( s)=\{ t\mid s\must a t\}$.  This
defines a bijection between LTS and $\nu$-calculus implementations.

\begin{theorem}
  For any LTS $\cI$ and any $\nu$-calculus expression $\cN$, $\cI\models
  \cN$ iff $\cI\mr \cN$.
\end{theorem}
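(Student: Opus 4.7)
My plan is to reduce everything to the corresponding statement for DMTS and lift back along the translation $\hd$, using only results already proved in the paper. The guiding observation is that the LTS-to-$\nu$-calculus embedding and the LTS-to-DMTS embedding are compatible under $\hd$: if $\cI=(S,S^0,\omust)$ is an LTS and we write $\cI_\nu=(S,S^0,\Delta)$ for its $\nu$-calculus implementation with $\Diamond(s)=\{\{(a,t)\}\mid s\must{a}t\}$ and $\Box^a(s)=\{t\mid s\must{a}t\}$, then unfolding the definitions of $\hd$ gives $\omay_{\hd(\cI_\nu)}=\{(s,a,t)\mid s\must{a}t\}$ and $\omust_{\hd(\cI_\nu)}=\{(s,\{(a,t)\})\mid s\must{a}t\}$, which is exactly the DMTS view of $\cI$. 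Hence $\hd(\cI_\nu)=\cI$ as a DMTS implementation.

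First, I would invoke Lemma~\ref{le:hmlnormal} to replace $\cN$ by an equivalent expression in normal form; this is harmless for both sides of the asserted equivalence, since it preserves $\sem{\cN}$ (hence $\models$) and one can check directly that the maximal fixed point defining $\lsem{\Delta}$, and therefore $\cI\mr\cN$, depends only on the implementation set of $\cN$ once the embedding is fixed (alternatively, one can just assume a normal form throughout, since $\hd$ is defined only on normal-form expressions anyway). With $\cN$ in normal form, $\hd(\cN)$ is a well-defined DMTS.

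Next I would chain the following equivalences, each of which is by an already-established result or by definition:
\begin{align*}
  \cI\models\cN
  &\iff \cI\in\sem{\cN} &&\text{(definition of $\sem\cN$)} \\
  &\iff \cI\in\sem{\hd(\cN)} &&\text{(Corollary~\ref{co:hd-treq})} \\
  &\iff \cI\mr\hd(\cN) &&\text{(definition of $\sem{\hd(\cN)}$ for DMTS)} \\
  &\iff \hd(\cI_\nu)\mr\hd(\cN) &&\text{(since $\hd(\cI_\nu)=\cI$)} \\
  &\iff \cI_\nu\mr\cN &&\text{(Theorem~\ref{th:nu->dmts}).}
\end{align*}
Identifying $\cI$ with its $\nu$-calculus image $\cI_\nu$ under the stated bijection, this is exactly $\cI\models\cN\iff\cI\mr\cN$.

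The only step requiring real care is the compatibility claim $\hd(\cI_\nu)=\cI$, since it depends on the precise definitions of the LTS-to-$\nu$-calculus embedding and of $\hd$; this is a short unfolding but is the load-bearing piece that glues the DMTS result to the $\nu$-calculus statement. Everything else is a direct appeal to Corollary~\ref{co:hd-treq}, Theorem~\ref{th:nu->dmts}, and the definitions of $\sem\cdot$ for the two formalisms.
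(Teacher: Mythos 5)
Your proof is correct and follows essentially the same route as the paper: the chain $\cI\models\cN \Leftrightarrow \cI\in\sem{\cN} \Leftrightarrow \cI\in\sem{\hd(\cN)} \Leftrightarrow \cI\mr\hd(\cN) \Leftrightarrow \cI\mr\cN$ via Corollary~\ref{co:hd-treq} and Theorem~\ref{th:nu->dmts} is exactly the paper's argument. The only difference is that you make explicit the compatibility $\hd(\cI_\nu)=\cI$ of the two embeddings of LTS, which the paper leaves implicit in its identification of LTS with $\nu$-calculus implementations; that is a reasonable bit of extra care, not a deviation.
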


\begin{proof}
  $\cI\models \cN$ is the same as $\cI\in \sem{ \cN}$, which by
  Corollary~\ref{co:hd-treq} is equivalent to $\cI\in \sem{ \hd( \cN)}$.
  By definition, this is the same as $\cI\mr \hd( \cN)$, which using
  Theorem~\ref{th:nu->dmts} is equivalent to $\cI\mr \cN$. \qed
\end{proof}

Using transitivity, this implies that modal refinement for
$\nu$-calculus is sound:

\begin{corollary}
  For all $\nu$-calculus expressions, $\cN_1\mr \cN_2$ implies $\cN_1\tr
  \cN_2$.\hfill$\Box$
\end{corollary}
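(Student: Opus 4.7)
The plan is to mimic the standard soundness argument already used for \textsc{DMTS} in Proposition~\ref{pr:mrvstr} and for \textsc{AA}/hybrid logic in the corollary to Theorem~\ref{th:trans-modref}, namely, chaining modal refinement with the newly-proven equivalence ``$\cI\models \cN$ iff $\cI\mr \cN$'' for implementations~$\cI$. Concretely, I want to show $\sem{\cN_1}\subseteq \sem{\cN_2}$ whenever $\cN_1\mr \cN_2$.

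The steps proceed as follows. First, fix an arbitrary $\cI\in \sem{\cN_1}$; by definition this means $\cI\models \cN_1$, and by the immediately preceding theorem this is equivalent to $\cI\mr \cN_1$. Combining with the assumption $\cN_1\mr \cN_2$ by transitivity of modal refinement on $\nu$-calculus expressions, I obtain $\cI\mr \cN_2$, whence the same theorem yields $\cI\models \cN_2$, i.e.\ $\cI\in \sem{\cN_2}$. Since $\cI$ was arbitrary, $\cN_1\tr \cN_2$.

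The only point that is not already explicit in the paper is transitivity of $\mr$ for $\nu$-calculus expressions. I would dispatch this in one of two equally short ways: either (a) observe directly that the composition $R_2\circ R_1$ of two modal refinements is again a modal refinement—both clauses in the definition of modal refinement for normal-form expressions compose immediately because the $\Box^a$-clause is a universal-existential match and the $\Diamond$-clause is a two-level existential-universal match on the right-hand side, the standard pattern—or (b) transport the question to \textsc{DMTS} via Theorem~\ref{th:nu->dmts}, using that $\cN_1\mr \cN_2$ iff $\hd(\cN_1)\mr \hd(\cN_2)$, and invoking the already-noted transitivity of $\mr$ on \textsc{DMTS}.

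There is no real obstacle here; the result is a corollary in the strict sense, and the only care needed is that the preceding theorem is applied once to $\cI$ against $\cN_1$ and once against $\cN_2$, which is legitimate because $\cI$ being an implementation is an intrinsic property independent of the specification under test.
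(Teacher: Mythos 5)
Your proposal is correct and is essentially the paper's own argument: the paper derives the corollary by exactly this chaining of the preceding theorem ($\cI\models\cN$ iff $\cI\mr\cN$) with transitivity of modal refinement. Your extra care in justifying transitivity of $\mr$ for normal-form $\nu$-calculus expressions (either by composing refinement relations or via Theorem~\ref{th:nu->dmts} and transitivity on DMTS) is a valid filling-in of a step the paper leaves implicit.
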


\subsection{Disjunction and conjunction}

As for DMTS, disjunction of $\nu$-calculus expressions is
straight-forward.  Given $\nu$-calculus expressions $\cN_1=( X_1, X^0_1,
\Delta_1)$, $\cN_2=( X_2, X^0_2, \Delta_2)$ in normal form, their
\emph{disjunction} is $\cN_1\lor \cN_2=( X_1\cup X_2, X_1^0\cup X_2^0,
\Delta)$ with $\Delta( x_1)= \Delta_1( x_1)$ for $x_1\in X_1$ and
$\Delta( x_2)= \Delta_2( x_2)$ for $x_2\in X_2$.

The \emph{conjunction} of $\nu$-calculus expressions like above is
$\cN_1\land \cN_2=( X, X^0, \Delta)$ defined by $X= X_1\times X_2$,
$X^0= X_1^0\times X_2^0$, $\Box^a( x_1, x_2)= \Box_1^a( x_1)\times
\Box_2^a( x_2)$ for each $( x_1, x_2)\in X$, $a\in \Sigma$, and for each
$( x_1, x_2)\in X$,
\begin{align*}
  \Diamond( x_1, x_2) &= \big\{ \{( a,( y_1, y_2))\mid( a, y_1)\in N_1,
  ( y_1, y_2)\in \Box^a( x_1, x_2)\}\bigmid N_1\in \Diamond_1(
  x_1)\big\} \\
  &\;\cup \big\{ \{( a,( y_1, y_2))\mid( a, y_2)\in N_2, ( y_1, y_2)\in
  \Box^a( x_1, x_2)\}\bigmid N_2\in \Diamond_2( x_2)\big\}.
\end{align*}

Note that both $\cN_1\lor \cN_2$ and $\cN_1\land \cN_2$ are again
$\nu$-calculus expressions in normal form.

\begin{theorem}
  \label{th:logops}
  For all $\nu$-calculus expressions $\cN_1$, $\cN_2$, $\cN_3$ in normal
  form,
  \begin{compactitem}
  \item $\cN_1\lor \cN_2\mr \cN_3$ iff $\cN_1\mr \cN_3$ and $\cN_2\mr
    \cN_3$,
  \item $\cN_1\mr \cN_2\land \cN_3$ iff $\cN_1\mr \cN_2$ and $\cN_1\mr
    \cN_3$,
  \item $\sem{ \cN_1\lor \cN_2}= \sem{ \cN_1}\cup \sem{ \cN_2}$, and
    $\sem{ \cN_1\land \cN_2}= \sem{ \cN_1}\cap \sem{ \cN_2}$.
  \end{compactitem}
\end{theorem}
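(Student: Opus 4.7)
The plan is to prove the three bullets by direct construction. For the refinement statements I build explicit witness relations; the implementation-set identities then follow by combining the refinement characterization $\cI \models \cN \iff \cI \mr \cN$ established just above with the two refinement statements.

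Disjunction is straightforward. Assuming $X_1$ and $X_2$ are disjoint (relabel otherwise), the declaration $\Delta$ of $\cN_1 \lor \cN_2$ agrees with $\Delta_i$ on $X_i$, so $\Box^a$ and $\Diamond$ on elements of $X_i$ coincide with $\Box_i^a$ and $\Diamond_i$. Given a witness $R$ for $\cN_1 \lor \cN_2 \mr \cN_3$, the restrictions $R_i = R \cap (X_i \times X_3)$ directly witness $\cN_i \mr \cN_3$, and the initial-state condition splits since the initial set of the disjunction is $X_1^0 \cup X_2^0$. The converse takes the union $R_1 \cup R_2$, whose refinement conditions hold pointwise because each state's declaration is unchanged.

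Conjunction is where the real work lies. For the backward direction, given $R \subseteq X_1 \times (X_2 \times X_3)$ witnessing $\cN_1 \mr \cN_2 \land \cN_3$, project to
\[
  R_2 = \{\, (x_1, x_2) \mid \exists x_3 :\ (x_1, (x_2, x_3)) \in R \,\} \quad\text{and similarly } R_3.
\]
The box clause for $R_i$ follows from $\Box^a(x_2, x_3) = \Box_2^a(x_2) \times \Box_3^a(x_3)$, reading off the $i$-th coordinate. The diamond clause follows because each $N_i \in \Diamond_i(x_i)$ lifts (via the $i$-th clause of the definition of $\Diamond(x_2, x_3)$) to some $N \in \Diamond(x_2, x_3)$, which $R$ matches by an $N_1 \in \Diamond_1(x_1)$; projecting this match yields the matching into $N_i$. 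For the forward direction, given $R_2, R_3$, define
\[
  R = \{\, (x_1, (x_2, x_3)) \mid (x_1, x_2) \in R_2 \text{ and } (x_1, x_3) \in R_3 \,\}.
\]
The box clause is immediate: for $y_1 \in \Box_1^a(x_1)$, pick $y_i \in \Box_i^a(x_i)$ with $(y_1, y_i) \in R_i$ to get $(y_2, y_3) \in \Box^a(x_2, x_3)$ and $(y_1, (y_2, y_3)) \in R$. The diamond clause is the subtle point: given $N \in \Diamond(x_2, x_3)$ arising without loss of generality from $N_2 \in \Diamond_2(x_2)$, use $R_2$ to obtain a matching $N_1 \in \Diamond_1(x_1)$; for each $(a, y_1) \in N_1$, choose $(a, y_2) \in N_2$ with $(y_1, y_2) \in R_2$, and invoke the $R_3$ box clause (using the normal-form linkage between diamond and box elements from Lemma~\ref{le:hmlnormal}) to find $y_3 \in \Box_3^a(x_3)$ with $(y_1, y_3) \in R_3$; then $(a, (y_2, y_3)) \in N$ and $(y_1, (y_2, y_3)) \in R$, as required.

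Finally, the implementation identities follow by routine bookkeeping: $\cI \in \sem{\cN_1 \land \cN_2}$ iff $\cI \mr \cN_1 \land \cN_2$ iff $\cI \mr \cN_1$ and $\cI \mr \cN_2$ iff $\cI \in \sem{\cN_1} \cap \sem{\cN_2}$; symmetrically for disjunction, handling multiple initials by partitioning them according to which disjunct each one matches. The main obstacle throughout is the diamond clause for the forward direction of conjunction, where the interaction between the product structure on boxes and the lifting of diamonds into $\Diamond(x_2,x_3)$ must be reconciled using the normal form.
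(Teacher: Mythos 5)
Your overall route---a direct construction of witness relations, instead of the paper's reduction (the paper observes that $\hd(\cN_1\lor\cN_2)=\hd(\cN_1)\lor\hd(\cN_2)$ and $\hd(\cN_1\land\cN_2)=\hd(\cN_1)\land\hd(\cN_2)$ for the DMTS operations and then transfers the corresponding DMTS results through Theorem~\ref{th:nu->dmts} and Corollary~\ref{co:hd-treq})---is legitimate, and your disjunction argument and the backward (projection) direction for conjunction are fine. But the step you yourself call the subtle point does not go through as written. In the forward direction for conjunction, given $(a,y_1)\in N_1$ you pick $(a,y_2)\in N_2$ with $(y_1,y_2)\in R_2$ and then claim that the box clause of $R_3$, ``using the normal-form linkage'' of Lemma~\ref{le:hmlnormal}, yields $y_3\in\Box_3^a(x_3)$ with $(y_1,y_3)\in R_3$. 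The box clause of $R_3$ only applies to elements of $\Box_1^a(x_1)$, and the lemma only guarantees that the diamond target $y_1$ satisfies $y_1\tr y_1'$ for \emph{some} box target $y_1'\in\Box_1^a(x_1)$. From $(y_1',y_3)\in R_3$ and $y_1\tr y_1'$ you cannot conclude $(y_1,y_3)\in R_3$, nor can you simply add such pairs to $R_3$ and expect it to remain a modal refinement: $\tr$ is the semantic (thorough) relation, and importing it into the syntactic relation $R_3$ is precisely the modal versus modal-thorough confusion the paper is at pains to keep apart. A second hole sits in the same step: to have $(a,(y_2,y_3))\in N$ at all, the definition of $\Diamond(x_2,x_3)$ requires $y_2\in\Box_2^a(x_2)$, whereas the matched $(a,y_2)\in N_2$ is again only guaranteed to be $\tr$-below some box target of $x_2$.

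Both holes close if one strengthens the normal form so that every diamond target is literally a box target, i.e.\ $(a,y)\in N\in\Diamond(x)$ implies $y\in\Box^a(x)$; this is also what is needed for $\hd(\cN)$ to satisfy the DMTS consistency condition, and it is what the paper's own proof implicitly exploits when it falls back on the DMTS conjunction result of~\cite{DBLP:conf/concur/BenesDFKL13} (there the must-target is a may-target by definition of DMTS, so the may clause of the second refinement applies to it directly). As the proposal stands, relying only on the $\tr$-linkage stated in Lemma~\ref{le:hmlnormal}, the diamond clause in the forward conjunction direction is unproven. A smaller secondary point: for $\sem{\cN_1\lor\cN_2}\subseteq\sem{\cN_1}\cup\sem{\cN_2}$ your phrase ``partitioning the initials according to which disjunct each one matches'' does not suffice, since an implementation whose initial states split between the two disjuncts is thereby placed in neither $\sem{\cN_1}$ nor $\sem{\cN_2}$; this inclusion needs a separate argument (or a restriction on implementations).
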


\begin{theorem}
  \label{th:distlat}
  With operations $\lor$ and $\land$, the class of $\nu$-calculus
  expressions forms a bounded distributive lattice up to
  $\mreq$.
\end{theorem}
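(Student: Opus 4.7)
The backbone of the proof is Theorem~\ref{th:logops}, whose first two bullets characterize $\lor$ as the join and $\land$ as the meet in the preorder $\mr$. From these universal properties alone, standard abstract order-theoretic arguments yield associativity, commutativity, and idempotence of both operations, both absorption laws, and well-definedness of $\lor,\land$ on $\mreq$-equivalence classes, all up to $\mreq$; in particular, the quotient by $\mreq$ inherits the structure of a lattice.

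For the bounds, I would exhibit concrete representatives in normal form. Take $\top=(\{x\},\{x\},\Delta_\top)$ with $\Diamond(x)=\emptyset$ and $\Box^a(x)=\{x\}$ for every $a$ (semantically $\nu x.\bigland_a[a]x\equiv\ttt$); and $\bot=(\{x\},\{x\},\Delta_\bot)$ with $\Diamond(x)=\{\emptyset\}$ and $\Box^a(x)=\emptyset$ (semantically $\fff$). Both satisfy the normal-form side conditions vacuously, and the modal-refinement conditions for $\cN\mr\top$ and $\bot\mr\cN$ collapse to trivially true statements for any $\cN$.

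For distributivity, the cleanest route is to prove only one of the two laws, namely $\cN_1\land(\cN_2\lor\cN_3)\mreq(\cN_1\land\cN_2)\lor(\cN_1\land\cN_3)$, and then invoke the classical fact that in any lattice either distributive law implies the other. By direct inspection of the definitions of $\land$ and $\lor$, the two sides have state spaces in natural bijection via $X_1\times(X_2\cup X_3)\leftrightarrow(X_1\times X_2)\cup(X_1\times X_3)$; under this bijection, the initial-state sets, the $\Box^a$-assignments, and the $\Diamond$-assignments coincide exactly, so the two expressions are actually isomorphic as normal-form $\nu$-calculus expressions, which certainly implies $\mreq$.

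The main obstacle is the bookkeeping in this distributivity step. For a state $(x_1,w)$ with $w\in X_2\cup X_3$ on the left-hand side one must track which summand of the disjoint union $w$ lies in, unfold $\Diamond_\lor(w)$ and then the definition of $\land$ accordingly, and verify that the restriction of each resulting disjunctive must-set to $\Box^a(x_1,w)=\Box_1^a(x_1)\times\Box_{2/3}^a(w)$ matches exactly the disjunctive must-set produced by $\cN_1\land\cN_2$ or $\cN_1\land\cN_3$ on the corresponding summand of the right-hand side. Everything else is either purely formal from Theorem~\ref{th:logops} or immediate from the definitions.
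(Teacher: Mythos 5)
Your overall route is essentially the paper's own: the lattice structure on $\mreq$-classes is read off from the universal properties in Theorem~\ref{th:logops}, bounds are exhibited explicitly, and distributivity is proved by observing that, under the canonical identification $X_1\times(X_2\cup X_3)=(X_1\times X_2)\cup(X_1\times X_3)$, the two sides of the law are literally the same normal-form expression (initial sets, $\Box^a$- and $\Diamond$-assignments coincide, since $\Box^a$ and $\Diamond$ of a disjunction simply restrict to the respective summands), so the identity relation is a two-sided modal refinement; this is exactly the argument in the paper's appendix. Your shortcut of checking only one distributive law and invoking the standard fact that in any lattice either law implies the other is a harmless simplification of the paper's ``things are similar for the other distributive law''. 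Your representative of $\top$ (with $\Diamond(x)=\emptyset$ and $\Box^a(x)=\{x\}$ for all $a$) is fine, and has the advantage over the paper's $\Delta(s)=\ttt$ of being literally in normal form.

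There is, however, one concrete error: your candidate for $\bot$ is not the least element. The class of $\nu$-calculus expressions includes expressions with an empty set of initial variables --- indeed the paper takes the empty expression $(\emptyset,\emptyset,\emptyset)$ as its bottom. Your $\bot$ has one initial variable $x$ (with $\Diamond(x)=\{\emptyset\}$), and modal refinement requires every initial variable on the left to be matched by some initial variable on the right; hence your $\bot$ does not refine any expression $\cN$ with $X^0=\emptyset$, so the claim that the conditions for $\bot\mr\cN$ ``collapse to trivially true statements for any $\cN$'' fails for exactly those $\cN$. In the $\mreq$-quotient your element sits strictly above the true bottom (it refines every expression having at least one initial variable, but not the empty expression, which refines everything vacuously). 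The fix is immediate: take the empty expression as the bottom, as the paper does; the rest of your argument is unaffected.
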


The bottom element (up to $\mreq$) in the lattice is the empty
$\nu$-calculus expression $\bot=( \emptyset, \emptyset, \emptyset)$, and
the top element (up to $\mreq$) is $\top=(\{ s\},\{ s\}, \Delta)$ with
$\Delta( s)= \ttt$.

\subsection{Structural composition}

The structural composition operator for a specification theory is to
mimic, at specification level, the structural composition of
implementations.  That is to say, if $\para{}$ is a composition operator for
implementations (LTS), then the goal is to extend $\para{}$ to specifications
such that for all specifications $\cS_1$, $\cS_2$,
\begin{equation}
  \label{eq:compcomplete}
  \sem{ \cS_1\para{} \cS_2}= \big\{ \cI_1\para{} \cI_2\mid \cI_1\in
  \sem{ \cS_1}, \cI_2\in \sem{ \cS_2}\big\}.
\end{equation}

For simplicity, we use CSP-style synchronization for structural
composition of LTS, however, our results readily carry over to other
types of composition.  Analogously to the situation for
MTS~\cite{DBLP:journals/tcs/BenesKLS09}, we have the following negative
result:

\begin{theorem}
  There is no operator $\para{}$ for the $\nu$-calculus which
  satisfies~\eqref{eq:compcomplete}.
\end{theorem}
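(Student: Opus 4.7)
The plan is to derive a contradiction by transferring the hypothetical composition from the $\nu$-calculus to DMTS via the structural equivalence established in Section~\ref{se:mu}.

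First I would assume that such an operator $\para{}$ exists on $\nu$-calculus expressions and define an induced composition on DMTS by $\cD_1 \para{} \cD_2 := \hd(\ddh(\cD_1) \para{} \ddh(\cD_2))$. Combining Theorem~\ref{th:dmts->nu}, Theorem~\ref{th:nu->dmts}, Corollary~\ref{co:thorough1}, and Corollary~\ref{co:hd-treq}, which together give $\sem{\ddh(\cD)} = \sem{\cD}$ and $\sem{\hd(\cN)} = \sem{\cN}$, this induced DMTS composition would inherit the equality $\sem{\cD_1 \para{} \cD_2} = \{\cI_1 \para{} \cI_2 \mid \cI_i \in \sem{\cD_i}\}$. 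So it suffices to rule out such an operator at the level of DMTS.

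Next I would construct, analogously to the MTS setting of~\cite{DBLP:journals/tcs/BenesKLS09}, specific DMTS $\cD_1, \cD_2$ whose composed implementation set $\{\cI_1 \para{} \cI_2 \mid \cI_i \in \sem{\cD_i}\}$ cannot coincide with $\sem{\cD}$ for any DMTS $\cD$. The argument should use the fact that any DMTS implementation set is bisimulation-closed and has further structural closure properties inherited from its maximal fixed-point character (reflecting the fact that $\sem{\cD}$ is a downward-closed set of refinements of $\cD$), whereas CSP-style synchronization of implementations may produce LTS collections violating these closures.

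The hard part will be pinning down the explicit counterexample. Because DMTS are considerably more expressive than MTS---they admit disjunctive must-transitions and multiple initial states, making any \emph{finite} set of finite LTS expressible up to bisimulation---a plain transfer of the MTS counterexample will not work. The construction must instead exhibit specifications whose composition produces a structurally richer (typically infinite) collection of LTS outrunning the expressive range of DMTS: for instance, specifications whose CSP product induces a branching pattern that, viewed as a hypothetical $\sem{\cD}$, would force $\cD$ to admit further implementations not arising as any $\cI_1 \para{} \cI_2$, contradicting~\eqref{eq:compcomplete}.
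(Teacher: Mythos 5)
Your reduction to DMTS via the translations is the same first step the paper takes, and it is sound. But the rest of the argument has a genuine gap: you never identify the invariant that actually yields the contradiction, and you explicitly discard the route that works. The paper's key observation (a consequence of Theorem~\ref{th:logops}) is that \emph{every} $\nu$-calculus -- equivalently, every DMTS -- implementation set is closed under disjunction: if $\cI_1, \cI_2 \in \sem{\cN}$ then $\cI_1 \lor \cI_2 \in \sem{\cN}$. So if an operator satisfying~\eqref{eq:compcomplete} existed, then for all DMTS $\cD_1, \cD_2$ the set $\{\cI_1 \para{} \cI_2 \mid \cI_1 \in \sem{\cD_1}, \cI_2 \in \sem{\cD_2}\}$ would have to be disjunction-closed; Example~7.8 of~\cite{DBLP:journals/tcs/BenesKLS09} exhibits two MTS for which it is not, and since every MTS is a DMTS this is already a counterexample in the DMTS setting. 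Your proposal only mentions ``bisimulation-closed'' and unspecified ``further structural closure properties,'' which is too vague to carry the argument, and it produces no concrete counterexample -- the ``hard part'' is exactly what is left open.

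Moreover, your reason for rejecting the transfer of the MTS counterexample is misdirected. The obstruction is not that the composed implementation set outruns the expressive range of DMTS in some absolute sense (e.g.\ by being an ``infinite'' or structurally richer collection); it is that the composed set violates a closure property shared by \emph{all} DMTS implementation sets. The greater expressiveness of DMTS over MTS (disjunctive musts, multiple initial states) is irrelevant here: it does not buy implementation sets that fail disjunction closure, so the old MTS example applies verbatim. In particular, your claim that DMTS can express any finite set of finite LTS up to bisimulation is false precisely because of this closure property. To repair the proof, replace the speculative construction by the disjunction-closure argument and invoke the existing counterexample rather than trying to build a new one.
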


\begin{proof}
  We first note that due to Theorem~\ref{th:logops}, it is the case that
  implementation sets of $\nu$-calculus expressions are closed under
  disjunction: for any $\nu$-calculus expression $\cN$ and $\cI_1,
  \cI_2\in \sem{ \cN}$, also $\cI_1\lor \cI_2\in \sem{ \cN}$.

  Now assume there were an operator as in the theorem, then because of
  the translations, \eqref{eq:compcomplete} would also hold for DMTS.
  Hence for all DMTS $\cD_1$, $\cD_2$, $\{ \cI_1\para{} \cI_2\mid \cI_1\in
  \sem{ \cD_1}, \cI_2\in \sem{ \cD_2}\}$ would be closed under
  disjunction.  But Example 7.8 in~\cite{DBLP:journals/tcs/BenesKLS09}
  exhibits two DMTS (actually, MTS) for which this is not the case, a
  contradiction. \qed
\end{proof}

Given that we cannot have~\eqref{eq:compcomplete}, the revised goal is
to have a \emph{sound} composition operator for which the right-to-left
inclusion holds in~\eqref{eq:compcomplete}.  We can obtain one such from
the structural composition of AA introduced
in~\cite{DBLP:conf/concur/BenesDFKL13}.  We hence define, for
$\nu$-calculus expressions $\cN_1=( X_1, X_1^0, \Delta_1)$, $\cN_2=(
X_2, X_2^0, \Delta_2)$ in normal form, $\cN_1\para{} \cN_2= \bh( \hb(
\cN_1)\para A \hb( \cN_2))$, where $\para A$ is AA
composition and we write $\bh= \ddh\circ \bd$ and $\hb= \db\circ \hd$
for the composed translations.

Notice that the involved translation from AA to DMTS may
lead to an exponential blow-up.  Unraveling the definition gives us the
following explicit expression for $\cN_1\para{} \cN_2=( X, X^0,
\Delta)$:
\begin{compactitem}
\item $X=\big\{\{( a,( y_1, y_2))\mid \forall i\in\{ 1, 2\}:( a, y_i)\in
  M_i\}\bigmid \forall i\in\{ 1, 2\}: M_i\subseteq \Sigma\times X_i,
  \exists x_i\in X_i: \forall( a, y_i')\in M_i: y_i'\in \Box_i^a( x_i),
  \forall N_i\in \Diamond_i( x_i): N_i\cap M_i\ne \emptyset\big\}$,
\item $X^0=\big\{\{( a,( y_1, y_2))\mid \forall i\in\{ 1, 2\}:( a,
  y_i)\in M_i\}\bigmid \forall i\in\{ 1, 2\}: M_i\subseteq \Sigma\times
  X_i, \exists x_i\in X_i^0: \forall( a, y_i')\in M_i: y_i'\in \Box_i^a(
  x_i), \forall N_i\in \Diamond_i( x_i): N_i\cap M_i\ne
  \emptyset\big\}$,
\item $\Diamond( x)=\big\{\{( a,\{( b,( z_1, z_2))\mid \forall i\in\{ 1,
  2\}:( b, z_i)\in M_i\}\mid \forall i\in\{ 1, 2\}: M_i\subseteq
  \Sigma\times X_i, \forall( a, z_i')\in M_i: z_i'\in \Box_i^b( y_i),
  \forall N_i\in \Diamond_i( y_i): N_i\cap M_i\ne \emptyset\}\bigmid(
  a,( y_1, y_2))\in x\big\}$ for each $x\in X$, and
\item $\Box^a( x)=\{ y\mid \exists N\in \Diamond( x):( a, y)\in N\}$.
\end{compactitem}

\begin{theorem}
  \label{th:indimp}
  For all $\nu$-calculus expressions $\cN_1$, $\cN_2$, $\cN_3$, $\cN_4$
  in normal form, $\cN_1\mr \cN_3$ and $\cN_2\mr \cN_4$ imply
  $\cN_1\para{} \cN_2\mr \cN_3\para{} \cN_4$.
\end{theorem}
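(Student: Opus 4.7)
The plan is to reduce the statement to the analogous independent-implementability property for structural composition of acceptance automata, which is the result on $\para{A}$ established in~\cite{DBLP:conf/concur/BenesDFKL13}, and then transport it through the translation chain $\hb=\db\circ\hd$ and $\bh=\ddh\circ\bd$ used in the very definition of $\para{}$. All the ingredients needed are already available as ``iff'' statements in the paper, so the proof becomes an essentially mechanical diagram chase.

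More concretely, I would proceed as follows. First, from $\cN_1\mr \cN_3$ I would derive $\hb(\cN_1)\mr \hb(\cN_3)$ by composing Theorem~\ref{th:nu->dmts} (which gives $\hd(\cN_1)\mr \hd(\cN_3)$) with part~(3) of Theorem~\ref{th:trans-modref} (which then gives $\db(\hd(\cN_1))\mr \db(\hd(\cN_3))$). The analogous argument applied to $\cN_2\mr \cN_4$ yields $\hb(\cN_2)\mr \hb(\cN_4)$.

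Second, I would invoke independent implementability of AA composition, which states that for acceptance automata $\cA_1\mr \cA_3$ and $\cA_2\mr \cA_4$ one has $\cA_1\para A \cA_2\mr \cA_3\para A \cA_4$; this is the defining design property of $\para A$ as used in~\cite{DBLP:conf/concur/BenesDFKL13}. Applied to the four AA obtained above, it delivers $\hb(\cN_1)\para A \hb(\cN_2)\mr \hb(\cN_3)\para A \hb(\cN_4)$. Finally, I would push this refinement back through $\bh=\ddh\circ \bd$: part~(4) of Theorem~\ref{th:trans-modref} preserves the refinement under $\bd$, and Theorem~\ref{th:dmts->nu} then preserves it under $\ddh$. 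Unfolding the definition $\cN_1\para{} \cN_2=\bh(\hb(\cN_1)\para A \hb(\cN_2))$ gives exactly $\cN_1\para{} \cN_2\mr \cN_3\para{} \cN_4$, as required.

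The only non-routine step is the middle one, namely independent implementability of $\para A$ for AA. This is not proved in the present excerpt, so the proof must either quote it explicitly from~\cite{DBLP:conf/concur/BenesDFKL13} or recall a short direct argument: given AA modal refinements $R_{13}$ and $R_{24}$, the pairwise product $\{((s_1,s_2),(s_3,s_4))\mid (s_1,s_3)\in R_{13},\ (s_2,s_4)\in R_{24}\}$ is a modal refinement on the composed state space, because the transition constraints of $\para A$ are defined componentwise by synchronisation and the two refinement conditions on AA (the ``$\forall$ in $M_1$'' and ``$\forall$ in $M_2$'' clauses) lift coordinatewise to pairs. Everything else in the proof is then just bookkeeping with the iff-translations already established.
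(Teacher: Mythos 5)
Your proposal is correct and follows exactly the paper's argument: the paper proves Theorem~\ref{th:indimp} by citing the analogous independent-implementability property for AA composition from~\cite{DBLP:conf/concur/BenesDFKL13} and transporting it through Theorems~\ref{th:trans-modref}, \ref{th:dmts->nu} and~\ref{th:nu->dmts}, which is precisely your diagram chase through $\hb$ and $\bh$. Your extra sketch of the product-relation argument for $\para A$ is a reasonable expansion of the cited AA result but not part of the paper's own proof.
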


\begin{proof}
  This follows directly from the analogous property for
  AA~\cite{DBLP:conf/concur/BenesDFKL13} and the translation
  theorems~\ref{th:trans-modref}, \ref{th:dmts->nu}
  and~\ref{th:nu->dmts}. \qed
\end{proof}

This implies the right-to-left inclusion in~\eqref{eq:compcomplete},
\ie~$\big\{ \cI_1\para{} \cI_2\mid \cI_1\in \sem{ \cN_1}, \cI_2\in \sem{
  \cN_2}\big\}\subseteq \sem{ \cN_1\para{} \cN_2}$.  It also entails
\emph{independent implementability}, in that the structural composition
of the two refined specifications $\cN_1$, $\cN_2$ is a refinement of
the composition of the original specifications $\cN_3$, $\cN_4$.
Fig.~\ref{fi:comp} shows an example of the DMTS analogue of this
structural composition.

\begin{figure}[tp]
  \centering
  \begin{tikzpicture}[->, >=stealth', font=\footnotesize,
    state/.style={shape=circle, draw, initial text=,inner
      sep=.5mm,minimum size=5mm}, scale=.8]
    \begin{scope}
      \node at (0,1.5) {$\cD_1$};
      \node[state,initial] (X) at (0,0) {$s_1$};
      \node[state] (Z) at (2,.7) {$t_1$};
      \node[state] (Y) at (2,-.7) {$u_1$};
      \coordinate (XX) at (.8,0);
      \path[-] (X) edge (XX); 
      \path (XX) edge[bend left] node[above]{$a$} (Z); 
      \path (XX) edge[bend right] node[below]{$b$} (Y); 
      \path (Z) edge[loop right] node[right]{$a$} (Z);
    \end{scope}
    \begin{scope}[xshift=15em]
      \node at (0,1.5) {$\cD_2$};
      \node[state,initial] (X) at (0,.7) {$s_2$};
      \node[state] (Z) at (2,.7) {$t_2$};
      \node[state] (Y) at (0,-.7) {$u_2$};
      \path (X) edge node[above]{$a$} (Z); 
      \path (X) edge node[right]{$a$} (Y); 
      \path (Z) edge[loop right] node[right]{$a$} (Z);
    \end{scope}
    \begin{scope}[xshift=30em]
      \node at (0,1.5) {$\cD_1\para{} \cD_2$};
      \node[state, initial] (X) at (0,.7) {$s'$};
      \node[state] (Z) at (2,.7) {$t'$};
      \node[state, initial] (Y) at (0,-.7) {$u'$};
      \path (X) edge node[above]{$a$} (Z); 
      \path (X) edge node[right]{$a$} (Y); 
      \path (Z) edge[loop right] node[right]{$a$} (Z);
    \end{scope}
  \end{tikzpicture}
  \caption{%
    \label{fi:comp}
    DMTS $\cD_1$, $\cD_2$ and the reachable parts of their structural
    composition $\cD_1\para{} \cD_2$.  Here, $s'=\{( a,( t_1, t_2)),(
    a,( t_1, u_2))\}$, $t'=\{( a,( t_1, t_2))\}$ and $u'= \emptyset$.
    Note that $\cD_1\para{} \cD_2$ has two initial states.}
\end{figure}
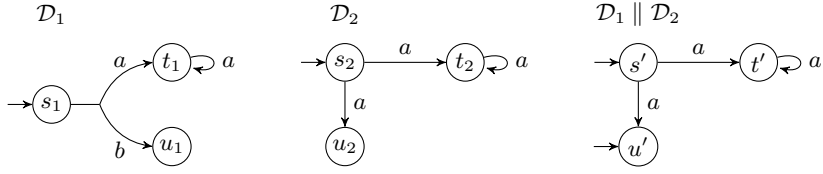

\subsection{Quotient}

The quotient operator $\by{}$ for a specification theory is used to
synthesize specifications for components of a structural composition.
Hence it is to have the property, for all specifications $\cS$, $\cS_1$
and all implementations $\cI_1$, $\cI_2$, that
\begin{equation}
  \label{eq:quotient}
  \cI_1\in \sem{ \cS_1} \text{ and } \cI_2\in \sem{ \cS\by{} \cS_1}
  \text{ imply } \cI_1\para{} \cI_2\in \sem \cS.
\end{equation}
Furthermore, $\cS\by{} \cS_1$ is to be as permissive as possible.

We can again obtain such a quotient operator for $\nu$-calculus from the
one for AA introduced in~\cite{DBLP:conf/concur/BenesDFKL13}.  Hence we
define, for $\nu$-calculus expressions $\cN_1$, $\cN_2$ in normal form,
$\cN_1\by{} \cN_2= \bh( \hb( \cN_1)\by A \hb( \cN_2))$, where $\by A$ is
AA quotient.  We recall the construction of $\by A$
from~\cite{DBLP:conf/concur/BenesDFKL13}:

Let $\cA_1=( S_1, S_1^0, \Tran_1)$, $\cA_2=( S_2, S_2^0, \Tran_2)$ be AA
and define $\cA_1\by A \cA_2=( S,\{ s^0\}, \Tran)$, with $S= 2^{
  S_1\times S_2}$, $s^0=\{( s_1^0, s_2^0)\mid s_1^0\in S_1^0, s_2^0\in
S_2^0\}$, and $\Tran$ given as follows:

Let $\Tran( \emptyset)= 2^{ \Sigma\times\{ \emptyset\}}$.  For $s=\{(
s_1^1, s_2^1),\dots,( s_1^n, s_2^n)\}\in S$, say that $a\in \Sigma$ is
\emph{permissible from $s$} if it holds for all $i= 1,\dots, n$ that
there is $M_1\in \Tran_1( s_1^i)$ and $t_1\in S_1$ for which $( a,
t_1)\in M_1$, or else there is no $M_2\in \Tran_2( s_2^i)$ and no
$t_2\in S_2$ for which $( a, t_2)\in M_2$.

For $a$ permissible from $s$ and $i\in\{ 1,\dots, n\}$, let $\{ t_2^{ i,
  1},\dots, t_2^{ i, m_i}\}=\{ t_2\in S_2\mid \exists M_2\in \Tran_2(
s_2^i):( a, t_2)\in M_2\}$ be an enumeration of the possible states in
$S_2$ after an $a$-transition and define $\postra[ a]{ s}=\big\{\{(
t_1^{ i, j}, t_2^{ i, j})\mid i= 1,\dots, n, j= 1,\dots, m_i\}\bigmid
\forall i: \forall j: \exists M_1\in \Tran_1( s_1^i):( a, t_1^{ i,
  j})\in M_1\big\}$, the set of all sets of possible assignments of
next-$a$ states from $s_1^i$ to next-$a$ states from $s_2^i$.

Now let $\postra{ s}=\{( a, t)\mid t\in \postra[ a]{ s}, a\text{
  admissible from } s\}$ and define $\Tran( s)=\{ M\subseteq \postra{
  s}\mid \forall i= 1,\dots, n: \forall M_2\in \Tran_2( s_2^i):
M\triangleright M_2\in \Tran_1( s_1^i)\}$.  Here $\triangleright$ is the
composition-projection operator defined by $M\triangleright M_2=\{( a,
t\triangleright t_2)\mid( a, t)\in M,( a, t_2)\in M_2\}$ and
$t\triangleright t_2=\{( t_1^1, t_2^1),\dots,( t_1^k,
t_2^k)\}\triangleright t_2^i= t_1^i$ (note that by construction, there
is precisely one pair in $t$ whose second component is $t_2^i$).

\begin{theorem}
  \label{th:quotient}
  For all $\nu$-calculus expressions $\cN$, $\cN_1$, $\cN_2$ in normal
  form, $\cN_2\mr \cN \by{} \cN_1$ iff $\cN_1\para{} \cN_2\mr \cN$.
\end{theorem}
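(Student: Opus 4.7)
The plan is to reduce the theorem to the analogous adjunction $\cA_2 \mr \cA \by A \cA_1 \iff \cA_1 \para A \cA_2 \mr \cA$ for acceptance automata, established in~\cite{DBLP:conf/concur/BenesDFKL13}, via the composed translations $\hb = \db\circ \hd$ and $\bh = \ddh\circ \bd$. Two preliminary ingredients are needed. First, both $\hb$ and $\bh$ preserve and reflect modal refinement; this is obtained by composing Theorem~\ref{th:trans-modref} (parts~3 and~4) with Theorems~\ref{th:dmts->nu} and~\ref{th:nu->dmts}. Second, the one-sided inequalities $\db(\bd(\cA))\mr \cA$ for every AA $\cA$ and $\bd(\db(\cD))\mr \cD$ for every DMTS $\cD$ hold; both are witnessed by the natural relation mapping a composite-translation state $M\in \Tran(s)$ back to the underlying $s$, and the verification is routine from the definitions of $\bd$ and $\db$.

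For the forward direction, starting from $\cN_2\mr \cN\by{}\cN_1 = \bh(\hb(\cN)\by A \hb(\cN_1))$, I apply $\hb$ to both sides and use Proposition~\ref{pr:id=dh*hd} to collapse $\hb\circ \bh$ into $\db\circ \bd$, obtaining $\hb(\cN_2)\mr \db(\bd(\hb(\cN)\by A \hb(\cN_1)))$. Chaining with the second ingredient and transitivity of $\mr$ yields $\hb(\cN_2)\mr \hb(\cN)\by A \hb(\cN_1)$. The AA adjunction then supplies $\hb(\cN_1)\para A \hb(\cN_2)\mr \hb(\cN)$; applying $\bh$ and using the DMTS inequality $\bd(\db(\hd(\cN)))\mr \hd(\cN)$ carried through preservation of $\mr$ by $\ddh$ concludes $\cN_1\para{}\cN_2\mr \cN$.

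The main obstacle is the backward direction: a parallel attempt only produces $\db(\bd(\hb(\cN_1)\para A \hb(\cN_2)))\mr \hb(\cN)$, and the step from there to $\hb(\cN_1)\para A \hb(\cN_2)\mr \hb(\cN)$ would require an inequality of the form $\cA\mr \db(\bd(\cA))$, which fails in general. I would instead exhibit the $\nu$-calculus refinement witness directly from the explicit quotient and composition constructions in the paper: given a modal refinement $R_0$ witnessing $\cN_1\para{}\cN_2\mr \cN$, I would define a relation between the variable set of $\cN_2$ and that of $\cN\by{}\cN_1$ by sending each variable of $\cN_2$ to the composite quotient state that records the ``right projection'' of $R_0$, and verify the $\Box^a$ and $\Diamond$ conditions of $\nu$-calculus modal refinement by unfolding the explicit definitions of $\para{}$ and $\by{}$.
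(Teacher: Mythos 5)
Your forward direction is correct, and it is essentially a more careful version of what the paper does: the paper's own proof is a one-line appeal to the AA adjunction of~\cite{DBLP:conf/concur/BenesDFKL13} together with Theorems~\ref{th:trans-modref}, \ref{th:dmts->nu} and~\ref{th:nu->dmts}, and your two one-sided round-trip refinements $\db(\bd(\cA))\mr\cA$ and $\bd(\db(\cD))\mr\cD$ (both true, with exactly the witnesses you describe, namely relating each state $M\in\Tran(s)$ to $s$) are precisely the extra facts needed to make that chain go through in the direction from $\cN_2\mr\cN\by{}\cN_1$ to $\cN_1\para{}\cN_2\mr\cN$. Your observation that the converse round trip $\cA\mr\db(\bd(\cA))$ fails in general is also right (already a one-state AA with $\Tran(s)=\{\{(a,s)\},\{(b,s)\}\}$ breaks it), so the obstacle you point to in the other direction is genuine and is not addressed by the paper's citation-style proof either.

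The gap is that for the backward direction you give only a plan, not a proof. The relation you announce is not actually defined: the variables of $\cN\by{}\cN_1=\bh(\hb(\cN)\by A\hb(\cN_1))$ are not quotient states but sets $M\in\Tran(\cdot)$ of the quotient AA, whose own states are subsets of a product of state sets, so ``the composite quotient state that records the right projection of $R_0$'' does not pick out a well-defined variable, and none of the $\Box^a$/$\Diamond$ clauses is verified. Moreover the obstruction you identified occurs twice, not once: even if you had established $\hb(\cN_1)\para A\hb(\cN_2)\mr\hb(\cN)$ and applied the AA adjunction to get $\hb(\cN_2)\mr\hb(\cN)\by A\hb(\cN_1)$, passing from this to $\cN_2\mr\bh(\hb(\cN)\by A\hb(\cN_1))$ needs an implication of the same forbidden shape ($\db(\cD)\mr\cA$ implying $\cD\mr\bd(\cA)$), which your own counterexample analysis rules out as a general principle. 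So completing this half requires a genuine argument exploiting the specific structure of the $\para A$ and $\by A$ constructions (or a stronger transfer property imported from the AA results), and that argument is exactly what is missing from the proposal; as it stands, only one implication of the theorem has been proved.
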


\begin{proof}
  From the analogous property for AA~\cite{DBLP:conf/concur/BenesDFKL13}
  and Theorems~\ref{th:trans-modref}, \ref{th:dmts->nu}
  and~\ref{th:nu->dmts}. \qed
\end{proof}

As a corollary, we get~\eqref{eq:quotient}: If $\cI_2\in \sem{ \cN\by{}
  \cN_1}$, \ie~$\cI_2\mr \cN\by{} \cN_1$, then $\cN_1\para{} \cI_2\mr
\cN$, which using $\cI_1\mr \cN_1$ and Theorem~\ref{th:indimp} implies
$\cI_1\para{} \cI_2\mr \cN_1\para{} \cI_2\mr \cN$.  The reverse
implication in Theorem~\ref{th:quotient} implies that $\cN\by{} \cN_1$
is as permissive as possible.

\begin{theorem}
  \label{th:resilat}
  With operations $\land$, $\lor$, $\para{}$ and $\by{}$, the class of
  $\nu$-calculus expressions forms a commutative residuated lattice up
  to $\mreq$.
\end{theorem}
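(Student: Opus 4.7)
The plan is to reduce each of the commutative-residuated-lattice axioms to properties that are already established, either in this paper or for AA in~\cite{DBLP:conf/concur/BenesDFKL13}, by transporting them across the refinement-preserving translations $\hb=\db\circ\hd$ and $\bh=\ddh\circ\bd$ of Theorems~\ref{th:trans-modref}, \ref{th:dmts->nu} and~\ref{th:nu->dmts}. The bounded-lattice part is Theorem~\ref{th:distlat}, and the residuation law $\cN_1\para{}\cN_2\mr \cN\iff \cN_2\mr \cN\by{}\cN_1$ is Theorem~\ref{th:quotient}. What remains is to verify that $(\para{},e)$ is a commutative monoid for a suitable identity $e$, and that $\para{}$ is monotone in both arguments (which is Theorem~\ref{th:indimp}).

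For commutativity and associativity, I would first observe that $\para A$ on AA is commutative and associative up to $\mreq$, since it is inherited pointwise from CSP synchronization of LTS via the set-of-implementations characterization of AA composition in~\cite{DBLP:conf/concur/BenesDFKL13}. Unwinding the definition, $\cN_1\para{}\cN_2 = \bh(\hb(\cN_1)\para A\hb(\cN_2))$, and since $\bh$ and $\hb$ preserve and reflect $\mr$, commutativity follows directly:
\begin{equation*}
    \cN_1\para{}\cN_2 = \bh(\hb(\cN_1)\para A\hb(\cN_2)) \mreq \bh(\hb(\cN_2)\para A\hb(\cN_1)) = \cN_2\para{}\cN_1.
\end{equation*}
For associativity, the nested term $\cN_1\para{}(\cN_2\para{}\cN_3)$ expands to $\bh(\hb(\cN_1)\para A \hb(\bh(\hb(\cN_2)\para A\hb(\cN_3))))$, and I would use that $\hb\circ\bh$ is the identity on AA up to $\mreq$ (Corollary~\ref{co:thorough1} combined with Theorem~\ref{th:trans-modref}) to collapse the inner $\hb\circ\bh$, reducing to associativity of $\para A$.

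For the identity, I would take $e = (\{x\},\{x\},\Delta)$ with $\Delta(x) = \bigland_{a\in\Sigma}(\langle a\rangle x \land [a]x)$, whose translation $\hb(e)$ is (up to $\mreq$) the one-state AA with $\Tran(x)=\{\{(a,x)\mid a\in\Sigma\}\}$; this is the neutral element for $\para A$ because, at the LTS level, the single-state LTS with all self-loops is neutral for CSP synchronization. Then $e\para{}\cN = \bh(\hb(e)\para A\hb(\cN))\mreq \bh(\hb(\cN))\mreq \cN$, again using that $\hb\circ\bh$ is $\mreq$-identity.

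The main obstacle I expect is the associativity bookkeeping: because $\para{}$ is defined through a round-trip into AA, the two parenthesizations produce syntactically distinct $\nu$-calculus expressions, and one has to be careful that the intermediate passes through $\bh$ and $\hb$ do not destroy refinement relationships. This is however automatic once one is willing to work throughout up to $\mreq$ and invokes the fact that all the translations in play are equivalences for $\mr$; no new combinatorial argument is needed beyond what has already been proved.
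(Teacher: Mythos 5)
Your overall skeleton (bounded lattice from Theorem~\ref{th:distlat}, residuation from Theorem~\ref{th:quotient}, the rest transported from AA through the translations) is the same as the paper's, but the way you discharge the commutative-monoid axioms has a genuine gap: you repeatedly argue ``up to $\mreq$'' via implementation semantics, and this does not work here. First, composition is \emph{not} characterized by implementation sets --- the paper itself proves that no operator satisfies \eqref{eq:compcomplete}; only the right-to-left inclusion holds --- so ``inherited pointwise from CSP synchronization via the set-of-implementations characterization'' is not available, neither for commutativity nor for neutrality of the unit. Second, even if two specifications had equal implementation sets, that only yields $\treq$, which is strictly weaker than the $\mreq$ the theorem asserts (Example~\ref{ex:mc_prop} gives $\tr$ without $\mr$; modal refinement is incomplete for nondeterministic specifications). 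For the unit the paper instead proves $\cA\para A\cunit\mreq\cA$ directly, by exhibiting the concrete two-sided modal refinement $R=\{((s,u),s)\mid s\in S\}$ and checking the transition-constraint conditions, and then transports this to the $\nu$-calculus; an argument of this syntactic kind is what you need in place of your LTS-level appeal.

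The same weakness appears in your associativity step: the claim that $\hb\circ\bh$ is the identity on AA up to $\mreq$ does not follow from Corollary~\ref{co:thorough1} together with Theorem~\ref{th:trans-modref}. The corollary gives only $\db(\bd(\cA))\treq\cA$, and Theorem~\ref{th:trans-modref} lets you transfer $\mr$ along each translation separately but says nothing about the round trip being $\mreq$-neutral; by Proposition~\ref{pr:id=dh*hd} your claim amounts to $\db(\bd(\cA))\mreq\cA$, which is nowhere stated and would need its own two-sided modal refinement (for instance relating each state $M$ of $\bd(\cA)$ to the states $s$ with $M\in\Tran(s)$). The paper sidesteps all of this: it proves only the unit law explicitly and takes the remaining monoid and compatibility properties from the AA-level results of the cited work, so if you want your collapse-the-round-trip route, you must supply the missing $\mreq$-equivalence yourself. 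As it stands, your proposal correctly settles the lattice and residuation parts but leaves unit, commutativity and associativity up to $\mreq$ unproved.
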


The unit of $\para{}$ (up to $\mreq$) is the $\nu$-calculus expression
corresponding to the LTS $\cunit=(\{ u\},\{ u\},\{( u, a, u)\mid a\in
\Sigma\})$.  We refer to~\cite{journal/ijac/HartRT02} for a good
reference on commutative residuated lattices.

\section{Conclusion and Further Work}

Using new translations between the modal $\nu$-calculus and DMTS, we
have exposed a structural equivalence between these two specification
formalisms.  This means that both types of specifications can be freely
mixed; there is no more any need to decide, whether due to personal
preference or for technical reasons, between one and the other.  Of
course, the modal $\nu$-calculus can only express safety properties; for
more expressivity, one has to turn to more expressive logics, and no
behavioral analogue to these stronger logics is known (neither is it
likely to exist, we believe).

Our constructions of composition and quotient for the modal
$\nu$-calculus expect (and return) $\nu$-calculus expressions in normal
form, and it is an interesting question whether they can be defined for
general $\nu$-calculus expressions.  (For disjunction and conjunction
this is of course trivial.)  Larsen's~\cite{DBLP:conf/concur/Larsen90}
has composition and quotient operators for Hennessy-Milner logic
(restricted to ``deterministic context systems''), but we know of no
extension (other than ours) to more general logics.

We also note that our hybrid modal logic appears related to the
\emph{Boolean equation
  systems}~\cite{thesis/Mader97,DBLP:conf/cav/Larsen92} which are used
in some $\mu$-calculus model checking algorithms.  The precise relation
between the modal $\nu$-calculus, our $\cL$-expressions and Boolean
equation systems should be worked out.  Similarly, acceptance automata
bear some similarity to the \emph{modal automata}
of~\cite{book/BradfieldS06}.

Lastly, we should note that we have
in~\cite{DBLP:conf/csr/BauerFLT12,DBLP:conf/mfcs/BauerFJLLT11}
introduced \emph{quantitative} specification theories for weighted modal
transition systems.  These are well-suited for specification and
analysis of systems with quantitative information, in that they replace
the standard Boolean notion of refinement with a robust distance-based
notion.  We are working on an extension of these quantitative formalisms
to DMTS, and hence to the modal $\nu$-calculus, which should relate our
work to other approaches at quantitative model checking such as
\eg~\cite{DBLP:conf/icalp/AlfaroHM03,DBLP:conf/concur/Alfaro03,DBLP:conf/concur/GeblerF12}.

\bibliographystyle{myabbrv}
\bibliography{mybib}

\begin{thebibliography}{10}

\bibitem{books/AcetoILS07}
L.~Aceto, A.~Ing{\'o}lfsd{\'o}ttir, K.~G. Larsen, and J.~Srba.
\newblock {\em Reactive Systems}.
\newblock {Cambridge Univ. Press}, 2007.

\bibitem{DBLP:conf/fase/BauerDHLLNW12}
S.~S. Bauer, A.~David, R.~Hennicker, K.~G. Larsen, A.~Legay, U.~Nyman, and
  A.~W{\k a}sowski.
\newblock Moving from specifications to contracts in component-based design.
\newblock In {\em FASE}, vol. 7212 of {\em {LNCS}}. Springer, 2012.

\bibitem{DBLP:conf/mfcs/BauerFJLLT11}
S.~S. Bauer, U.~Fahrenberg, L.~Juhl, K.~G. Larsen, A.~Legay, and C.~Thrane.
\newblock Quantitative refinement for weighted modal transition systems.
\newblock In {\em MFCS}, vol. 6907 of {\em {LNCS}}. {Springer}, 2011.

\bibitem{DBLP:conf/csr/BauerFLT12}
S.~S. Bauer, U.~Fahrenberg, A.~Legay, and C.~Thrane.
\newblock General quantitative specification theories with modalities.
\newblock In {\em CSR}, vol. 7353 of {\em {LNCS}}. {Springer}, 2012.

\bibitem{DBLP:conf/atva/BenesCK11}
N.~Bene{\v s}, I.~{\v C}ern{\'a}, and J.~K{\v r}et\'{\i}nsk{\'y}.
\newblock Modal transition systems: Composition and {LTL} model checking.
\newblock In  \cite{DBLP:conf/atva/2011}.

\bibitem{DBLP:conf/concur/BenesDFKL13}
N.~Bene{\v s}, B.~Delahaye, U.~Fahrenberg, J.~K{\v r}et\'{\i}nsk{\'y}, and
  A.~Legay.
\newblock {H}ennessy-{M}ilner logic with greatest fixed points.
\newblock In {\em CONCUR}, vol. 8052 of {\em {LNCS}}. {Springer}, 2013.

\bibitem{DBLP:conf/atva/BenesKLMS11}
N.~Bene{\v s}, J.~K{\v r}et\'{\i}nsk{\'y}, K.~G. Larsen, M.~H. M{\o}ller, and
  J.~Srba.
\newblock Parametric modal transition systems.
\newblock In  \cite{DBLP:conf/atva/2011}.

\bibitem{DBLP:journals/tcs/BenesKLS09}
N.~Bene{\v s}, J.~K{\v r}et\'{\i}nsk{\'y}, K.~G. Larsen, and J.~Srba.
\newblock On determinism in modal transition systems.
\newblock {\em {Th. Comp. Sci.}}, 410(41):4026--4043, 2009.

\bibitem{DBLP:journals/iandc/BenesKLS12}
N.~Bene{\v s}, J.~K{\v r}et{\'\i}nsk{\'y}, K.~G. Larsen, and J.~Srba.
\newblock {EXPTIME}-completeness of thorough refinement on modal transition
  systems.
\newblock {\em Inf. Comp.}, 218:54--68, 2012.

\bibitem{DBLP:journals/igpl/Blackburn00}
P.~Blackburn.
\newblock Representation, reasoning, and relational structures: a hybrid logic
  manifesto.
\newblock {\em Log. J. IGPL}, 8(3):339--365, 2000.

\bibitem{DBLP:journals/tcs/BoudolL92}
G.~Boudol and K.~G. Larsen.
\newblock Graphical versus logical specifications.
\newblock {\em {Th. Comp. Sci.}}, 106(1):3--20, 1992.

\bibitem{book/BradfieldS06}
J.~Bradfield and C.~Stirling.
\newblock Modal mu-calculi.
\newblock In {\em The Handbook of Modal Logic}. Elsevier, 2006.

\bibitem{DBLP:conf/atva/2011}
T.~Bultan and P.-A. Hsiung, eds.
\newblock {\em Automated Technology for Verification and Analysis, 9th Int.
  Symp., ATVA 2011}, vol. 6996 of {\em {LNCS}}. {Springer}, 2011.

\bibitem{DBLP:journals/iandc/CairesC03}
L.~Caires and L.~Cardelli.
\newblock A spatial logic for concurrency.
\newblock {\em Inf. Comp.}, 186(2), 2003.

\bibitem{DBLP:conf/lop/ClarkeE81}
E.~M. Clarke and E.~A. Emerson.
\newblock Design and synthesis of synchronization skeletons using
  branching-time temporal logic.
\newblock In {\em Logic of Programs}, vol. 131 of {\em {LNCS}}. Springer, 1981.

\bibitem{DBLP:conf/concur/Alfaro03}
L.~de~Alfaro.
\newblock Quantitative verification and control via the mu-calculus.
\newblock In {\em CONCUR}, vol. 2761 of {\em {LNCS}}. {Springer}, 2003.

\bibitem{DBLP:conf/icalp/AlfaroHM03}
L.~de~Alfaro, T.~A. Henzinger, and R.~Majumdar.
\newblock Discounting the future in systems theory.
\newblock In {\em ICALP}, vol. 2719 of {\em {LNCS}}. {Springer}, 2003.

\bibitem{DBLP:conf/concur/GeblerF12}
D.~Gebler and W.~Fokkink.
\newblock Compositionality of probabilistic {H}ennessy-{M}ilner logic through
  structural operational semantics.
\newblock In {\em CONCUR}, vol. 7454 of {\em {LNCS}}. {Springer}, 2012.

\bibitem{journal/ijac/HartRT02}
J.~B. Hart, L.~Rafter, and C.~Tsinakis.
\newblock The structure of commutative residuated lattices.
\newblock {\em Internat. J. Algebra Comput.}, 12(4):509--524, 2002.

\bibitem{DBLP:journals/jacm/Hennessy85}
M.~Hennessy.
\newblock Acceptance trees.
\newblock {\em J. ACM}, 32(4):896--928, 1985.

\bibitem{DBLP:journals/tcs/Kozen83}
D.~Kozen.
\newblock Results on the propositional $\mu$-calculus.
\newblock {\em {Th. Comp. Sci.}}, 27, 1983.

\bibitem{DBLP:conf/avmfss/Larsen89}
K.~G. Larsen.
\newblock Modal specifications.
\newblock In {\em Automatic Verification Methods for Finite State Systems},
  vol. 407 of {\em {LNCS}}. {Springer}, 1989.

\bibitem{DBLP:conf/concur/Larsen90}
K.~G. Larsen.
\newblock Ideal specification formalism = expressivity + compositionality +
  decidability + testability + ...
\newblock In {\em CONCUR}, vol. 458 of {\em {LNCS}}. {Springer}, 1990.

\bibitem{DBLP:journals/tcs/Larsen90}
K.~G. Larsen.
\newblock Proof systems for satisfiability in {H}ennessy-{M}ilner logic with
  recursion.
\newblock {\em {Th. Comp. Sci.}}, 72(2{\&}3):265--288, 1990.

\bibitem{DBLP:conf/cav/Larsen92}
K.~G. Larsen.
\newblock Efficient local correctness checking.
\newblock In {\em CAV}, vol. 663 of {\em {LNCS}}. {Springer}, 1992.

\bibitem{DBLP:conf/lics/LarsenX90}
K.~G. Larsen and L.~Xinxin.
\newblock Equation solving using modal transition systems.
\newblock In {\em LICS}. IEEE Computer Society, 1990.

\bibitem{thesis/Mader97}
A.~Mader.
\newblock {\em Verification of Modal Properties Using Boolean Equation
  Systems}.
\newblock PhD thesis, Technische Universit{\"a}t M{\"u}nchen, 1997.

\bibitem{DBLP:conf/csl/OHearnRY01}
P.~W. O'Hearn, J.~C. Reynolds, and H.~Yang.
\newblock Local reasoning about programs that alter data structures.
\newblock In {\em CSL}, vol. 2142 of {\em {LNCS}}. Springer, 2001.

\bibitem{book/Prior68}
A.~N. Prior.
\newblock {\em Papers on Time and Tense}.
\newblock Oxford: Clarendon Press, 1968.

\bibitem{report/irisa/Raclet07}
J.-B. Raclet.
\newblock Residual for component specifications.
\newblock Publication interne 1843, IRISA, Rennes, 2007.

\bibitem{DBLP:journals/entcs/Raclet08}
J.-B. Raclet.
\newblock Residual for component specifications.
\newblock {\em Electr. Notes Theor. Comput. Sci.}, 215:93--110, 2008.

\bibitem{DBLP:conf/lics/Reynolds02}
J.~C. Reynolds.
\newblock Separation logic: A logic for shared mutable data structures.
\newblock In {\em LICS}. IEEE Computer Society, 2002.

\bibitem{unpub/ScottB69}
D.~Scott and J.~W. de~Bakker.
\newblock A theory of programs.
\newblock Unpublished manuscript, IBM, Vienna, 1969.

\end{thebibliography}

\newpage

\section*{Appendix: Extra Lemmas and Proofs}

\begin{lemma}
  \label{le:dmtstobfsspecial}
  Let $\cD=( S, S^0, \omay, \omust)$ be a DMTS and $s\in S$.  For all
  $M_1, M_2\in \Tran( s)$ and all $M\subseteq \Sigma\times S$ with
  $M_1\subseteq M\subseteq M_1\cup M_2$, also $M\in \Tran( s)$.
\end{lemma}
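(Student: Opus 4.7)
The plan is to unfold the definition of $\Tran(s)$ given by the translation $\db$ from DMTS to AA, and then verify the two defining conditions for $M$ directly, using that both are inherited respectively from an upper and a lower inclusion.

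Recall that $\Tran(s) = \{ M'\subseteq \Sigma\times S \mid \forall(a,t)\in M': s\may{a}t,\ \forall s\must{}N: N\cap M'\ne\emptyset\}$. So to show $M \in \Tran(s)$ we must check (i) every $(a,t)\in M$ satisfies $s\may{a}t$, and (ii) for every disjunctive must-transition $s\must{}N$, $N\cap M\ne\emptyset$.

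For (i), I would use the upper inclusion $M \subseteq M_1\cup M_2$: any $(a,t)\in M$ lies in $M_1$ or in $M_2$, and in either case the membership $M_1,M_2\in\Tran(s)$ guarantees $s\may{a}t$. For (ii), I would use the lower inclusion $M_1\subseteq M$: for any $s\must{}N$, since $M_1\in\Tran(s)$ we already have some $(a,t)\in N\cap M_1$, and $M_1\subseteq M$ then gives $(a,t)\in N\cap M$, so $N\cap M\ne\emptyset$.

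There is no real obstacle here — the two conditions defining $\Tran(s)$ are respectively downward-closed under the $\subseteq M_1\cup M_2$ side and upward-closed under the $M_1\subseteq$ side, and the hypothesis bundles exactly those two inclusions. The only subtlety worth stating explicitly in the write-up is that neither $M$ nor $M_2$ needs to be related to $M_1$ in any finer way: we are not claiming $\Tran(s)$ is closed under arbitrary unions or intersections, only under this particular sandwich operation between two of its elements.
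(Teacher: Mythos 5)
Your proof is correct and is essentially identical to the paper's own argument: both unfold the definition of $\Tran(s)$ from the $\db$ translation, use $M\subseteq M_1\cup M_2$ for the may-condition and $M_1\subseteq M$ for the must-condition. Nothing further is needed.
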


\begin{proof}
  For $i=1,2$, since $M_i \in \Tran(s)$, we know that
  \begin{itemize}
  \item for all $(a,t) \in M_i$, $(s,a,t) \in \omay$, and
  \item for all $(s,N) \in \omust$, there is $(a,t) \in M_i \cap N$. 
  \end{itemize}

  Now as $M \subseteq M_1 \cup M_2$, it directly follows that for all
  $(a,t) \in M$, we have $(s,a,t) \in \omay$. Moreover, since $M_1
  \subseteq M$, we also have that for all $(s,N) \in \omust$, there
  exists $(a,t) \in M \cap N$.  As a consequence, $M \in \Tran(s)$. \qed
\end{proof}

\begin{proof}[of Lemma~\ref{le:bfstodmtsblowup}]
  Let $\Sigma=\{ a_1,\dots, a_n\}$ and $\cA=(\{ s^0\},\{ s^0\}, \Tran)$
  the AA with $\Tran( s^0)=\{ M\subseteq \Sigma\times\{ s^0\}\mid
  \exists k:| M|= 2k\}$ the transition constraint containing all
  disjunctive choices of even cardinality.  Let $\cD=( T, T^0, \omay,
  \omust)$ be a DMTS with $cD\treq \cA$; we claim that $\cD$ must have
  at least $2^{ n- 1}$ initial states.

  Assume, for the purpose of contradiction, that $T^0=\{ t^0_1,\dots,
  t^0_m\}$ with $m< 2^{ n- 1}$.  We must have $\bigcup_{ i= 1}^m
  \Tran_T( t^0_i)=\{ M\subseteq \Sigma\times T\mid \exists k:| M|=
  2k\}$, so that there is an index $j\in\{ 1,\dots, m\}$ for which
  $\Tran_T( t^0_j)=\{ M_1, M_2\}$ contains two different disjunctive
  choices from $\Tran_S( s^0)$.  By Lemma~\ref{le:dmtstobfsspecial},
  also $M\in \Tran_T( t^0_j)$ for any $M$ with $M_1\subseteq M\subseteq
  M_1\cup M_2$.  But $M_1\cup M_2$ has greater cardinality than $M_1$,
  so that there will be an $M\in \Tran_T( t^0_j)$ with odd cardinality. \qed
\end{proof}

\begin{proof}[of Theorem~\ref{th:trans-modref}]
  The first two equivalences in the theorem follow directly from the
  definitions.  Indeed, for the translation from $\cL$-expressions to
  AA, we have $\lsem{ \Phi( x)}= \Tran( x)$ by definition, hence $M\in
  \lsem{ \Phi( x)}$ iff $M\in \Tran( x)$.  For the other translation, we
  compute
  \begin{align*}
    \lsem{ \Phi( x)} &= \lsem{ \biglor_{ M\in \Tran( x)} \big(
      \bigland_{( a, t)\in M} \langle a\rangle t\land \bigland_{( b,
        u)\notin M} \neg \langle b\rangle u\big)} \\
    &= \bigcup_{ M\in \Tran( x)} \big( \bigcap_{( a, t)\in M} \{ M'\mid(
    a, t)\in M'\} \cap \bigcap_{( b, u)\notin M} \{ M'\mid( b, u)\notin
    M'\}\big) \\
    &= \bigcup_{ M\in \Tran( x)} \big( \{ M'\mid \forall( a, t)\in M:(
    a, t)\in M'\} \\[-3ex]
    & \hspace*{14em} \cap\{ M'\mid \forall( b, u)\notin M:( b, u)\notin
    M'\}\big) \\[1ex]
    &= \bigcup_{ M\in \Tran( x)} \big( \{ M'\mid M\subseteq M'\}\cap\{
    M'\mid M'\subseteq M\}\big) \\
    &= \bigcup_{ M\in \Tran( x)} M= \Tran( x).
  \end{align*}

  \smallskip \noindent \underline{$\cD_1\mr \cD_2$ implies $\db(
    \cD_1)\mr \db( \cD_2)$:}

  Let $\cD_1=( S_1, S^0_1, \omay_1, \omust_1)$, $\cD_2=( S_2, S^0_2,
  \omay_2, \omust_2)$ be DMTS and assume $\cD_1\mr \cD_2$.  Then we have
  a modal refinement relation (in the DMTS sense) $R\subseteq S_1\times
  S_2$.  Now let $( s_1, s_2)\in R$ and $M_1\in \Tran_1( s_1)$, and
  define
  \begin{equation*}
    M_2=\{( a, t_2)\mid s_2\may{a}_2 t_2, \exists( a, t_1)\in M_1:( t_1,
    t_2)\in R\}.
  \end{equation*}

  The condition
  \begin{equation*}
    \forall( a, t_2)\in M_2: \exists( a, t_1)\in M_1:( t_1, t_2)\in
    R
  \end{equation*}
  in the definition of AA refinement is satisfied by construction.  For
  the inverse condition, let $( a, t_1)\in M_1$, then $s_1\may{a}_1
  t_1$, so by DMTS refinement, there is $t_2\in S_2$ with $s_2\may{a}_2
  t_2$ and $( t_1, t_2)\in R$, whence $( a, t_2)\in M_2$ by construction.

  We are left with showing that $M_2\in \Tran_2( s_2)$.  First we notice
  that by construction, indeed $s_2\may{a}_2 t_2$ for all $( a, t_2)\in
  M_2$.  Now let $s_2\must{} N_2$; we need to show that $N_2\cap M_2\ne
  \emptyset$.

  By DMTS refinement, we have $s_1\must{} N_1$ such that $\forall( a,
  t_1)\in N_1: \exists( a, t_2)\in N_2:( t_1, t_2)\in R$.  We know that
  $N_1\cap M_1\ne \emptyset$, so let $( a, t_1)\in N_1\cap M_1$.  Then
  there also is $( a, t_2)\in N_2$ with $( t_1, t_2)\in R$.  But $( a,
  t_2)\in N_2$ implies $s_2\may{a}_2 t_2$, hence $( a, t_2)\in M_2$.

  \smallskip \noindent \underline{$\db( \cD_1)\mr \db( \cD_2)$ implies
    $\cD_1\mr \cD_2$:}

  Let $\cD_1=( S_1, S^0_1, \omay_1, \omust_1)$, $\cD_2=( S_2, S^0_2,
  \omay_2, \omust_2)$ be DMTS and assume $\db( \cD_1)\mr \db( \cD_2)$.
  Then we have a modal refinement relation (in the AA sense) $R\subseteq
  S_1\times S_2$.  Let $( s_1, s_2)\in R$.

  Let $s_1\may{a}_1 t_1$, then we cannot have $s_1\must{} \emptyset$.
  Let $M_1=\{( a, t_1)\}\cup \bigcup\{ N_1\mid s_1\must{} N_1\}$, then
  $M_1\in \Tran_1( s_1)$ by construction.  This implies that there is
  $M_2\in \Tran_2( s_2)$ and $( a, t_2)\in M_2$ with $( t_1, t_2)\in R$,
  but then also $s_2\may{a} t_2$ as was to be shown.

  Let $s_2\must{} N_2$ and assume, for the sake of contradiction, that
  there is no $s_1\must{} N_1$ for which $\forall( a, t_1)\in N_1:
  \exists( a, t_2)\in N_2:( t_1, t_2)\in R$ holds.  Then for each
  $s_1\must{} N_1$, there is an element $( a_{ N_1}, t_{ N_1})\in N_1$
  for which there is no $( a_{ N_1}, t_2)\in N_2$ with $( t_{ N_1},
  t_2)\in R$.

  Let $M_1=\{( a_{ N_1}, t_{ N_1})\mid s_1\must{} N_1\}$, then $M_1\in
  \Tran_1( s_1)$ by construction.  Hence we have $M_2\in \Tran_2( s_2)$
  satisfying the conditions in the definition of AA refinement.  By
  construction of $\Tran_2( s_2)$, $N_2\cap M_2\ne \emptyset$, so let $(
  a, t_2)\in N_2\cap M_2$.  Then there exists $( a, t_1)\in M_1$ for
  which $( t_1, t_2)\in R$, in contradiction to the definition of
  $M_1$.

  \smallskip \noindent \underline{$\cA_1\mr \cA_2$ implies $\bd( \cA_1)\mr \bd(
    \cA_2)$:}

  Let $\cA_1=( S_1, S^0_1, \Tran_1)$, $\cA_2=( S_2, S^0_2, \Tran_2)$ be
  AA, with DMTS translations $( D_1, D^0_1, \omust_1, \omay_1)$, $( D_2,
  D^0_2, \omust_2, \omay_2)$, and assume $\cA_1\mr \cA_2$.  Then we have
  a modal refinement relation (in the AA sense) $R\subseteq S_1\times
  S_2$.  Define $R'\subseteq D_1\times D_2$ by
  \begin{multline*}
    R'= \{( M_1, M_2)\mid \exists( s_1, s_2)\in R: M_1\in \Tran_1( s_1),
    M_2\in \Tran( s_2), \\
    \begin{aligned}
      & \forall( a, t_1)\in M_1: \exists( a, t_2)\in M_2:( t_1, t_2)\in
      R, \\
      & \forall( a, t_2)\in M_2: \exists( a, t_1)\in M_1:( t_1, t_2)\in
      R \}.
    \end{aligned}
  \end{multline*}
  We show that $R'$ is a modal refinement in the DMTS sense.  Let $(
  M_1, M_2)\in R'$.

  Let $M_2\must{}_2 N_2$.  By construction of $\omust$, there is $( a,
  t_2)\in M_2$ such that $N_2=\{( a, M_2')\mid M_2'\in \Tran_2( t_2)\}$.
  Then $( M_1, M_2)\in R'$ implies that there must be $( a, t_1)\in M_1$
  for which $( t_1, t_2)\in R$, and we can define $N_1=\{( a, M_1')\mid
  M_1'\in \Tran_1( t_1)\}$, whence $M_1\must{}_1 N_1$.

  We show that $\forall( a, M_1')\in N_1: \exists( a, M_2')\in N_2:(
  M_1', M_2')\in R'$: Let $( a, M_1')\in N_1$, then $M_1'\in \Tran_1(
  t_1)$.  From $( t_1, t_2)\in R$ we hence get $M_2'\in \Tran_2( t_2)$,
  and then $( a, M_2')\in N_2$ by construction of $N_2$ and $( M_1',
  M_2')\in R'$ due to the conditions of AA refinement (applied to $(
  t_1, t_2)\in R$).

  Let $M_1\may{a}_1 M_1'$, then we have $M_1\must{}_1 N_1$ for which $(
  a, M_1')\in N_1$ by construction of $\omay$.  This in turn implies
  that there must be $( a, t_1)\in M_1$ such that $N_1=\{( a, M_1'')\mid
  M_1''\in \Tran_1( t_1)\}$, and then by $( M_1, M_2)\in R'$, we get $(
  a, t_2)\in M_2$ for which $( t_1, t_2)\in R$.  Let $N_2=\{( a,
  M_2')\mid M_2'\in \Tran_2( t_2)\}$, then $M_2\must{}_2 N_2$ and hence
  $M_2\may{a}_2 M_2'$ for all $( a, M_2')\in N_2$.  On the other hand,
  the argument in the previous paragraph shows that there is $( a,
  M_2')\in N_2$ for which $( M_1', M_2')\in R'$.

  We miss to show that $R'$ is initialized.  Let $M_1^0\in D_1^0$, then
  we have $s_1^0\in S_1^0$ with $M_1^0\in \Tran_1( s_1^0)$.  As $R$ is
  initialized, this entails that there is $s_2^0\in S_2^0$ with $(
  s_1^0, s_2^0)\in R$, which gives us $M_2^0\in \Tran_2( s_2^0)$ which
  satisfies the AA refinement conditions, whence $( M_1^0, M_2^0)\in R'$.

  \smallskip \noindent \underline{$\bd( \cA_1)\mr \bd( \cA_2)$ implies $\cA_1\mr
    \cA_2$:}

  Let $\cA_1=( S_1, S^0_1, \Tran_1)$, $\cA_2=( S_2, S^0_2, \Tran_2)$ be
  AA, with DMTS translations $( D_1, D^0_1, \omust_1, \omay_1)$, $( D_2,
  D^0_2, \omust_2, \omay_2)$, and assume $\bd( \cA_1)\mr \bd( \cA_2)$.
  Then we have a modal refinement relation (in the DMTS sense)
  $R\subseteq D_1\times D_2$.  Define $R'\subseteq S_1\times S_2$ by
  \begin{equation*}
    R'=\{( s_1, s_2)\mid \forall M_1\in \Tran_1( s_1): \exists M_2\in
    \Tran_2( s_2):( M_1, M_2)\in R\};
  \end{equation*}
  we will show that $R'$ is an AA modal refinement.

  Let $( s_1, s_2)\in R'$ and $M_1\in \Tran_1( s_1)$, then by
  construction of $R'$, we have $M_2\in \Tran_2( s_2)$ with $( M_1,
  M_2)\in R$.

  Let $( a, t_2)\in M_2$ and define $N_2=\{( a, M_2')\mid M_2'\in
  \Tran_2( t_2)\}$, then $M_2\must{}_2 N_2$.  Now $( M_1, M_2)\in R$
  implies that there must be $M_1\must{}_1 N_1$ satisfying $\forall( a,
  M_1')\in N_1: \exists( a, M_2')\in N_2:( M_1', M_2')\in R$.  We have
  $( a, t_1)\in M_1$ such that $N_1=\{( a, M_1')\mid M_1'\in \Tran_1(
  t_1)\}$; we only miss to show that $( t_1, t_2)\in R'$.  Let $M_1'\in
  \Tran_1( t_1)$, then $( a, M_1')\in N_1$, hence there is $( a,
  M_2')\in N_2$ with $( M_1', M_2')\in R$, but $( a, M_2')\in N_2$ also
  entails $M_2'\in \Tran_2( t_2)$.

  Let $( a, t_1)\in M_1$ and define $N_1=\{( a, M_1')\mid M_1'\in
  \Tran_1( t_1)\}$, then $M_1\must{}_1 N_1$.  Now let $( a, M_1')\in
  N_1$, then $M_1\may{a}_1 M_1'$, hence we have $M_2\may{a}_2 M_2'$ for
  some $( M_1', M_2')\in R$ by modal refinement.  By construction of
  $\omay$, this implies that there is $M_2\must{}_2 N_2$ with $( a,
  M_2')\in N_2$, and we have $( a, t_2)\in M_2$ for which $N_2=\{( a,
  M_2'')\mid M_2''\in \Tran_2( t_2)\}$.  Now if $M_1''\in \Tran_1(
  t_1)$, then $( a, M_1'')\in N_1$, hence there is $( a, M_2'')\in N_2$
  with $( M_1'', M_2'')\in R$, but $( a, M_2'')\in N_2$ also gives
  $M_2''\in \Tran_2( t_2)$.

  We miss to show that $R'$ is initialized.  Let $s^0_1\in S^0_1$, then
  $\Tran_1( s^0_1)\ne \emptyset$, hence there is $M^0_1\in \Tran_1(
  s^0_1)$.  As $R$ is initialized, this gets us $M^0_2\in D_2$ with $(
  M^0_1, M^0_2)\in R$, but $M^0_2\in \Tran_2( s^0_2)$ for some $s^0_2\in
  S^0_2$, and then $( s^0_1, s^0_2)\in R'$. \qed
\end{proof}

\begin{proof}[of Lemma~\ref{le:hmlnormal}]
  It is shown in~\cite{DBLP:journals/tcs/BoudolL92} that any
  Hennessy-Milner formula is equivalent to one in so-called \emph{strong
    normal form}, \ie~of the form $\biglor_{ i\in I}( \bigland_{ j\in
    J_i} \langle a_{ ij}\rangle \phi_{ ij}\land \bigland_{ a\in \Sigma}[
  a] \psi_{ i, a})$ for HML formulas $\phi_{ ij}$, $\psi_{ i, a}$ which
  are also in strong normal form.  Now we can replace the $\phi_{ ij}$,
  $\psi_{ i, a}$ by (new) variables $x_{ ij}$, $y_{ i, a}$ and add
  declarations $\Delta_2( x_{ ij})= \phi_{ ij}$, $\Delta_2( y_{ i, a})=
  \psi_{ i, a}$ to arrive at an expression in which all formulae are of
  the form $\Delta_2( x)= \biglor_{ i\in I}( \bigland_{ j\in J_i} \langle
  a_{ ij}\rangle x_{ ij}\land \bigland_{ a\in \Sigma}[ a] x_{ i, a})$.

  Now for each such formula, replace (recursively) $x$ by new variables
  $\{ \tilde x^i\mid i\in I\}$ and set $\Delta_2( \tilde x^i)= \bigland_{
    j\in J_i}\langle a_{ ij}\rangle\big( \biglor_k \tilde
  x_{ij}^k\big)\land \bigland_{ a\in \Sigma}[ a]\big( \biglor_k \tilde
  x_{ i, a}^k\big)$.  Using initial variables $X_2^0=\{ \tilde x^i\mid
  x\in X_1^0\}$, the so-constructed $\nu$-calculus expression is
  equivalent to the original one. \qed
\end{proof}

\begin{proof}[of Theorem~\ref{th:logops}]
  This follows directly from the fact, easy to prove because of the
  purely syntactic translation, that $\hd( \cN_1\lor \cN_2)= \hd(
  \cN_1)\lor \hd( \cN_2)$ and similarly for conjunction, where the
  operations on the right-hand side are the ones defined for DMTS
  in~\cite{DBLP:conf/concur/BenesDFKL13}.  Given this, the theorem
  follows from similar properties for the DMTS
  operations~\cite{DBLP:conf/concur/BenesDFKL13}. \qed
\end{proof}

\begin{proof}[of Theorem~\ref{th:distlat}]
  Only distributivity remains to be verified.  Let $\cN_i=( X_i, X_i^0,
  \Delta_i)$, for $i= 1, 2, 3$, be $\nu$-calculus expressions in normal
  form.  The set of variables of both $\cN_1\land( \cN_2\lor \cN_3)$ and
  $( \cN_1\land \cN_2)\lor( \cN_1\land \cN_3)$ is $X_1\times( X_2\times
  X_3)$, and one easily sees that the identity relation is a two-sided
  modal refinement.  Things are similar for the other distributive law. \qed
\end{proof}

\begin{proof}[of Theorem~\ref{th:resilat}]
  We have already seen that the class of $\nu$-calculus expressions
  forms a lattice, up to $\mreq$, under $\land$ and $\lor$, and by
  Theorem~\ref{th:quotient}, $\by{}$ is the residual, up to $\mreq$, of
  $\para{}$.  We only miss to show that $\cunit$ is indeed the unit of
  $\para{}$; all other properties (such as distributivity of $\para{}$
  over $\lor$ or $\cN\para{} \bot\mreq \bot$) follow.

  We show that $\cA\para A \cunit\mreq \cA$ for all AA $\cA$; the
  analogous property for $\nu$-calculus expressions follows from the
  translations.  Let $\cA=( S, S^0, \Tran)$ be an AA and define $R=\{((
  s, u), s)\mid s\in S\}$; we show that $R$ is a two-sided modal
  refinement.  Let $(( s, u), s)\in R$ and $M\in \Tran( s, u)$, then
  there must be $M_1\in \Tran( s)$ for which $M= M_1\|( \Sigma\times\{
  u\})$.  Thus $M_1=\{( a, t)\mid( a,( t, u))\in M\}$.  Then any element
  of $M$ has a corresponding one in $M_1$, and vice versa, and their
  states are related by $R$.  For the other direction, let $M_1\in
  \Tran( s)$, then $M= M_1\|( \Sigma\times\{ u\})=\{( a,( t, u))\mid( a,
  t)\in M_1\}\in \Tran( s, u)$, and the same argument applies. \qed
\end{proof}

\end{document}